\newcommand{\name}[1]{\textsc{#1}}
\newcommand{\prepc}{{\sf PrepC}}
\newcommand{\prepp}{{\sf PrepP}}
\newcommand{\appc}{{\sf AppC}}
\newcommand{\appp}{{\sf AppP}}
\newcommand{\parnamedefn}[4]{
  \begin{tabbing}
    \name{#1} \\

    \emph{Input:} \hspace{0.8cm} \= \parbox[t]{\textwidth}{#2} \\
    \emph{Parameter:}            \> \parbox[t]{\textwidth}{#3} \\
    \emph{Question:}             \> \parbox[t]{\textwidth}{#4} \\
    \end{tabbing}
  \vspace{-0.5cm}
}
\newcommand{\decnamedefn}[3]{
  \begin{tabbing} #1\\

    \emph{Input:} \hspace{0.8cm} \= \parbox[t]{\textwidth}{#2} \\
    \emph{Question:}             \> \parbox[t]{\textwidth}{#3} \\
  \end{tabbing}
}
\newcommand{\ETH}{\textsf{ETH}}
\newcommand{\Yes}{{\sc Yes}}
\newcommand{\subdagiso}{{\sc SubTDAG isomorphism}}
\newcommand{\wsp}{{\sc WSP}$(=,\neq,<)$}
\newtheorem{assumption}{Assumption}
\newtheorem{observation}{Observation}
\title{Fixed-Parameter Tractability of\\ Workflow Satisfiability in the\\ Presence of Seniority Constraints}
\author{
J. Crampton\inst{1} \and R. Crowston\inst{1} \and G. Gutin\inst{1} \and M. Jones\inst{1} \and M.S. Ramanujan\inst{2}
}
\institute{Royal Holloway University of London, United Kingdom\\
           \and
           The Institute of Mathematical Sciences, Chennai, India}
\begin{document}

\maketitle

\begin{abstract}
The workflow satisfiability problem is concerned with determining whether it is possible to find an allocation of authorized users to the steps in a workflow in such a way that all constraints are satisfied.  The problem is NP-hard in general, but is known to be fixed-parameter tractable for certain classes of constraints.  The known results on fixed-parameter tractability rely on the symmetry (in some sense) of the constraints.  In this paper, we provide the first results that establish fixed-parameter tractability of the satisfiability problem when the constraints are asymmetric.  In particular, we introduce the notion of seniority constraints, in which the execution of steps is determined, in part, by the relative seniority of the users that perform them.  Our results require new techniques, which make use of tree decompositions of the graph of the binary relation defining the constraint.  Finally, we establish a lower bound for the hardness of the workflow satisfiability problem.
\end{abstract}

\setcounter{footnote}{0}

\section{Introduction}\label{sec:intro}

A business process is a collection of interrelated steps that are performed in some predetermined sequence in order to achieve some objective.
It is increasingly common to automate business process and for business process management systems or workflow management systems to control the execution of the steps comprising the business process.
A workflow specification is an abstract representation of a collection of business steps, together with dependencies on the order in which steps should be performed.
A workflow specification may be instantiated and its execution controlled by a workflow management system.

In many situations, we wish to restrict the users that can perform certain steps.
On the one hand, we may wish to specify which users are authorized to perform particular steps.
The workflow management system will prevent a user from performing any step for which that user is not authorized.
In addition, we may wish, either because of the particular requirements of the business application or because of statutory requirements, to prevent certain combinations of users from performing particular combinations of steps.
In particular, there may be pairs of steps that must be executed in any given instance of the workflow by different users, the so-called ``two-man rule'' (or ``four-eyes rule'').
Similarly, we may require that two or more steps in any given instance are performed by the same user.
These constraints are sometimes known in the literature as separation-of-duty and binding-of-duty constraints, respectively.

The existence of constraints on the execution of a workflow raises the question of whether a workflow specification can be realized in practice.
As a trivial example, a workflow with two steps and the requirement that a different user performs each of the two steps cannot be realized by a user population with a single user.
Therefore, it is important to be able to determine whether a workflow is satisfiable: Does there exist an allocation of authorized users to workflow steps such that every step is performed by an authorized user and are all constraints on the execution of steps satisfied?

A brute-force approach to answering the question gives rise to an algorithm that has running time $O(cn^k)$, where $c$ is the number of constraints\footnote{Here and in the rest of the paper, all constraints are binary and a constraint can be checked in constant time.}, $n$ is the number of users and $k$ is the number of steps.
Moreover, it is known that determining the satisfiability of a workflow specification is NP-hard in general~\cite{WangLi10}.
However, it has also been shown that some interesting special cases of the problem are fixed-parameter tractable, meaning that there exists an algorithm to solve them with running time $O(f(k)n^d)$, where $d$ is some constant (independent of $k$ and $n$).
The existence of such an algorithm suggests that relatively efficient methods can be developed to solve interesting cases of the workflow satisfiability problem.

Wang and Li established that satisfiability is fixed-parameter tractable when we restrict attention to separation- and binding-of-duty constraints~\cite{WangLi10}.
Crampton {\em et al.} developed a novel analysis of the problem, which reduced the complexity considerably, but retained the focus on separation- and binding-of-duty constraints~\cite{CrGuYe12}.
In this paper, we consider a new class of constraints, in which the users that perform two steps are different and one is senior to the other.
Seniority constraints are asymmetric, in contrast to separation- and binding-of-duty constraints, and this means that existing techniques for determining workflow satisfiability cannot be applied to workflow specifications that contain such constraints.

In this paper, we introduce novel techniques for determining workflow satisfiability when the specification includes seniority constraints.
These techniques are based on the tree decomposition of the graph of the seniority relation and the application of dynamic programming to a particular form of tree decomposition.
This enables us to establish that the workflow satisfiability problem is fixed-parameter tractable when the  partial order defined over the set of users has Hasse diagram (viewed as an acyclic digraph) of bounded treewidth\footnote{We define treewidth of a graph in Sec.~\ref{sec:tree-decomposition}.}. As we will see, many user hierarchies that arise in practice have bounded treewidth.  However, our result is highly unlikely to hold for an arbitrary partial order defined over the set of users.
Moreover, we show that it is impossible to obtain an algorithm for the general case of WSP with running time significantly better than $O(cn^k)$, assuming the Exponential Time Hypothesis (ETH)~\cite{ImPaZa01} holds.

We conclude this section by providing some terminology and notation on directed and undirected graphs.
In the next section, we introduce the workflow satisfiability problem and further justify the relevance of seniority constraints.
In Sec.~\ref{sec:tree-decomposition}, we describe tree decompositions, define treewidth and show its relevance to practical  seniority constraints, and establish some elementary, preparatory results.
Section~\ref{sec:bounded-treewidth} establishes fixed-parameter tractability of the above-mentioned ``treewidth bounded" case of the problem and the following section establishes a lower bound for the complexity of the general problem (assuming ETH holds).
We conclude the paper with a summary of our contributions, a discussion of the significance of our results, and some suggestions for future work.

\paragraph{Terminology and Notation for Graphs} Let $G$ be a directed or undirected graph and let $X$ be a set of vertices of $G$. The subgraph $G[X]$ of $G$ {\em induced by} $X$ is obtained from $G$ by deleting all vertices not in $X$. Let $D$ be a directed graph. The {\em underlying graph} $U(D)$ is the undirected graph obtained from $D$ by removing orientations from all arcs of $D$. We say that $D$ is {\em connected} if $U(D)$ is connected.
We say that $D$ is {\em transitive} if for every pair $x,y$ of distinct vertices, if there is a directed path from $x$ to $y$ then $D$ contains an arc from $x$ to $y$. We say that a directed graph $H$ is the {\em transitive closure} of $D$
if there is an arc from $x$ to $y$ in $H$ whenever there is a directed path from $x$ to $y$ in $D$.
The {\em degree} of a vertex $x$ of $D$ is its degree in $U(D)$. Let $H$ be a directed or undirected graph. For a natural number $\ell$, we say that $H$ is $\ell$-degenerate if $H[X]$ has a vertex of degree at most $\ell$ for each set of vertices $X$ of $H$. As an example, consider a forest. Note that it is 1-degenerate.
Let $D$ be a digraph, $Y$ a set of vertices of $D$, and $y,z$ vertices in $D - Y$. We say that $Y$ {\em separates} $y$ {\em from} $z$ if $D - Y$ has no directed path from $y$ to $z$.

\section{Workflow Satisfiability}\label{sec:satisfiability}

Suppose we are given a workflow specification comprising a set $S$ of $k$ steps.
A workflow constraint has the form $(\rho,s',s'')$, where $s',s'' \in S$ and $\rho$ is a binary relation defined over a set $U$ of $n$ users.
For each step $s\in S$, there is a list $L(s)$ of users authorized to perform $s$.
A function $\pi$ from $S$ to $U$ is called a {\em plan}. We say that a plan $\pi$  {\em satisfies} constraint $(\rho,s',s'')$ if $(\pi(s'),\pi(s''))\in \rho$.

For a set, $\{\rho_1,\ldots ,\rho_t\}$, of binary relations on $U$, an instance $\cal I$ of the workflow satisfiability problem {\sc WSP}($\rho_1,\ldots ,\rho_t$) is given by a list $L(s)$ for each $s\in S$ and a set $C$ of constraints of the form $(\rho,s',s'')$, where $s',s''\in S$ and $\rho \in \{\rho_1,\dots,\rho_t\}$; we are to decide whether there is a {\em valid} plan, i.e., a plan $\pi$ such that the following hold:
\begin{itemize}
\item for each $s\in S$, $\pi(s)\in L(s)$;
\item $\pi$ satisfies each constraint $(\rho_i,s',s'')\in C$.
\end{itemize}
If $\cal I$ has a valid plan, it is called a {\sc Yes}-instance. Otherwise, it is a {\sc No}-instance.

Let $<$ be a partial order on $U$.
We will consider constraints of the form $(\rho,s',s'')$, where $\rho$ is one of $=$, $\neq$, $<$, and $s',s'' \in S$.
A plan $\pi$ satisfies:
\begin{itemize}
  \item $(=,s',s'')$ if $\pi(s')=\pi(s'')$;
  \item $(\neq,s',s'')$ if $\pi(s')\neq \pi(s'')$;
  \item $(<,s',s'')$ if $\pi(s')< \pi(s'')$.
\end{itemize}

Consider a business process for handling expenses claims, which
is illustrated in Fig.~\ref{fig:example-workflow}.
Such a workflow might include four steps: the preparation of an expenses claim (\prepc), the approval of the claim (\appc), the preparation of the payment (\prepp), and the approval of the payment (\appp).
We might assume that most, if not all, users in an organization are authorized to prepare an expenses claim.
We require that the user who approves a claim is senior to the user who prepares a claim.
Note that it would be either difficult or impractical to enforce this rule simply by restricting the users who are authorized to approve claims.
(We could authorize only the most senior user to approve expenses claims, but this is unnecessarily limiting and places an onerous burden on a single individual.)
Similarly, we require that the user who approves a payment be senior to the user who prepares the payment.
In addition, we require that the user who prepares the expenses claim is different from the one who prepares the payment, and the user that approves the claim is different from the user who prepares the payment and from the one who approves the payment.

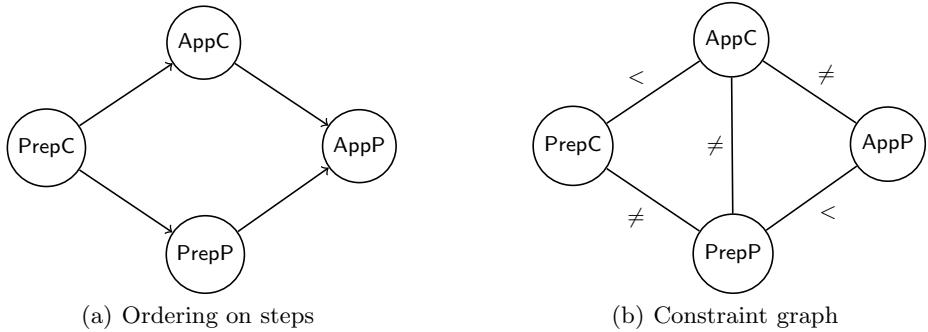
\begin{figure}[ht]\centering
\subfigure[Ordering on steps]{
\begin{tikzpicture}[->,.=stealth',node distance=0.75cm and 1.5cm,semithick,auto,inner sep=1mm,scale=.9,transform shape]
  \node[state]  (t1)                      {\prepc};
  \node[state]  (t2) [above right=of t1]        {$\appc$};
  \node[state]  (t3) [below right=of t1]  {\prepp};
  \node[state]  (t4) [below right=of t2]  {\appp};
  \path (t1) edge (t2)
   (t1) edge (t3)
   (t2) edge (t4)
   (t3) edge (t4);
\end{tikzpicture}}
\hfill
\subfigure[Constraint graph]{
\begin{tikzpicture}[-,.=stealth',node distance=0.75cm and 1.5cm,semithick,auto,scale=.9,transform shape]
  \node[state]  (t1)                      {\prepc};
  \node[state]  (t2) [above right=of t1]        {$\appc$};
  \node[state]  (t3) [below right=of t1]  {\prepp};
  \node[state]  (t4) [below right=of t2]  {\appp};
  \draw[-] (t1) to node {$<$} (t2);
  \draw[-] (t1) to node[swap] {$\ne$} (t3);
  \draw[-] (t2) to node[swap] {$\ne$} (t3);
  \draw[-] (t2) to node {$\ne$} (t4);
  \draw[-] (t3) to node[swap] {$<$} (t4);
\end{tikzpicture}}
\caption{A simple constrained workflow for purchase order processing}\label{fig:example-workflow}
\end{figure}

It is perhaps worth noting at this stage that the use of an access control model that incorporates some notion of seniority (role-based access control and information flow models being obvious candidates) does not necessarily enforce the desired constraints.
We might assign the \prepc\ and \appc\ steps to two different roles $r$ and $r'$, say, with $r < r'$.
However, this does not enforce the desired constraint: a user assigned to $r'$ is indirectly assigned to $r$ and is, therefore, authorized to perform both steps.

It is worth noting, however, that access control models do define (albeit indirectly) an ordering on the set of users.
In particular, we may define $u < u'$ if the set of steps for which $u$ is authorized is a strict subset of the set of steps for which $u'$ is authorized.
The relation $<$ is transitive.
The relation $\leq$, where $u \leq u'$ if and only if $u < u'$ or $u = u'$ is transitive, reflexive and anti-symmetric; that is, $\leq$ defines a partial order on $U$.

We also note that many organizations have user hierarchies that define the reporting and management lines within those organizations.
If such a hierarchy exists, we may evaluate our seniority constraints with respect to such a hierarchy (rather than an ordering defined by the authorization policy).
In many cases, such a user hierarchy will be a rooted tree, although our results do not require this and more complex hierarchies do arise in practice.
At Royal Holloway, University of London, for example, each of the three faculty Deans reports to and is managed by each of the three Vice Principals, as shown in Fig.~\ref{fig:example-rhul}.
The complete bipartite subgraph within a user hierarchy that is a feature of this hierarchy also arises in the (graphs of the) relations of the preorders that are obtained from an authorization policy: each user in the set of users authorized for $S' \subseteq S$ is senior to each user in the set of users authorized for $S'' \subset S'$.

\begin{figure}\centering
\begin{tikzpicture}
\node (P) at (2,-1) {Principal};
\node (VP1) at (0,-2) {VP$_1$} edge [->] (P);
\node (VP2) at (2,-2) {VP$_2$} edge [->] (P);
\node (VP3) at (4,-2) {VP$_3$} edge [->] (P);
\node (D1) at (0,-4) {Dean$_1$} edge [->] (VP1) edge [->] (VP2) edge [->] (VP3);
\node (D2) at (2,-4) {Dean$_2$} edge [->] (VP1) edge [->] (VP2) edge [->] (VP3);
\node (D3) at (4,-4) {Dean$_3$} edge [->] (VP1) edge [->] (VP2) edge [->] (VP3);
\end{tikzpicture}
\caption{Part of the user hierarchy at Royal Holloway}\label{fig:example-rhul}
\end{figure}
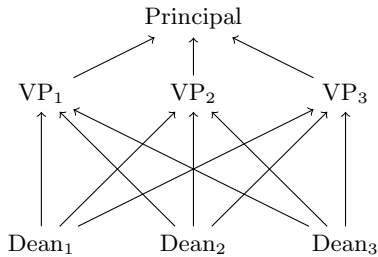

\subsection{Constraint Graphs}

Given a partial (irreflexive) order  $<$ on  $U$, let $H$ be the transitive acyclic graph with vertex set $U$ such that $u < v$ if and only if there is an arc from $u$ to $v$ in $H$. We say $H$ is the \emph{full graph} of $(U, <)$.
Let $D$ be an directed acyclic graph such that $H$ is the transitive closure of $D$ and the transitive closure of every subgraph $D-a$, where $a$ is an arc of $D$, is not equal to $H$. Note that since $H$ is acyclic, $D$ is unique \cite{AhoGarUll72} (see also Sec. 2.3 of \cite{BJBGG}).
We say that $D$ is the \emph{reduced graph} (or \emph{Hasse diagram}) of $(U,<)$.

A \emph{mixed graph} consists of a set of vertices together with a set of undirected edges and a set of directed arcs.
We may represent the set of constraints with a mixed graph as follows.

First, we eliminate constraints of the form $(=,s',s'')$.
Specifically, we construct a graph $P$ with vertices $S$ in which $s',s''\in S$ are adjacent
if $\cal I$ has a constraint $(=,s',s'')$. Observe that the same user must necessarily be assigned to all steps
in a connected component $Q$ of $P$. Thus,
if there is a pair $s',s''\in Q$ such that $\cal I$ has a constraint
$(\neq ,s',s'')$ or $(<,s',s'')$, then clearly $\cal I$ is a {\sc
No}-instance; thus we may assume
that there is no such pair for any connected component of $H$.
For each connected component $Q$ of $P$, replace all steps of $Q$ in $S$
by a ``superstep'' $q$. A user $u$ is authorized to perform $q$ if $u$ is
authorized to perform all steps of $Q$. That is, $L(q)=\bigcap_{s\in Q}
L(s)$.

The above procedure eliminates all constraints of the type $(=,s',s'')$ for
the reduced set $S$ of steps. All constraints of the types $(\neq ,s',s'')$ and $(<,s',s'')$
remain, but steps $s'$ and $s''$ are replaced by the corresponding
supersteps. For simplicity of notation, we will denote the new instance of
the problem also by $\cal I$.

Now we construct a mixed graph with vertex set $S$. For each constraint of the type $(\neq ,s',s'')$, add an \emph{edge} between $s'$ and $s''$. For each constraint of the type $(< ,s',s'')$, add an \emph{arc} from $s'$ and $s''$.
We will refer to the resulting graph as the \emph{constraint graph} (of $\mathcal{I}$).
We will say an edge or arc in a constraint graph is \emph{satisfied} by a
plan $\pi$ if $\pi$ satisfies the corresponding constraint.

It is worth noting that WSP$(\neq)$ is rather closely related to graph colorability, where the assignment of users to tasks in such a way that separation-of-duty constraints are satisfied provides a coloring of the constraint graph and vice versa\footnote{In fact, WSP($\neq$) is equivalent to the more general problem List Coloring, as the list $L(s)$ imposes restrictions on the ``colors'' (users) that can be assigned to step $s$.}.
Note that the selection of a color for step $s$ in the constraint graph prevents the use of only one color for steps connected by an edge to $s$.  WSP$(<,\ne)$ is an even more complex problem because it imposes a structure on the set of colors that are available, meaning that the selection of a color for $s$ may preclude the use of many other colors for steps connected to $s$ by an arc.

Consider, for example, an organization with three users -- Alice, Bob and Carol, where Alice is senior to Bob and Carol and all three users are authorized for all tasks.
Then, our expenses claim workflow is not satisfiable.
However, the workflow specification is satisfiable if we replace the seniority constraints with separation-of-duty constraints.%

\subsection{Related Work}

Suppose we have an algorithm that solves an NP-hard problem in time $O(f(k)n^d)$, where $n$ denotes the size of the input to the problem, $k$ is some (small) parameter of the problem, $f$ is some function in $k$ only, and $d$ is some constant (independent of $k$ and $n$).
Then we say the algorithm is a \emph{fixed-parameter tractable} (FPT) algorithm.
If a problem can be solved using an FPT algorithm then we say that it is an \emph{FPT problem} and that it belongs to the class FPT\footnote{For more information on parameterized algorithms and complexity, see monographs \cite{DowFel99,FluGro06,Nie06}.}.

Wang and Li initiated the study of the fixed-parameter tractability of  workflow satisfiability~\cite{WangLi10}.
They showed that the problem is W[1]-hard, in general, which implies that it is not FPT (unless the parameterized complexity hypothesis $\text{FPT}\neq\text{W[1]}$ fails, which is believed to be highly unlikely).
However, they were able to show that WSP$(=,\ne)$ is FPT.

Crampton {\em et al.} introduced new techniques for analyzing WSP$(=,\ne)$ and significantly improved the complexity of FPT algorithms to solve the problem~\cite{CrGuYe12}.
The approach of Crampton {\em et al.} is based on partitioning the set of steps and, for each block of steps in the partition, assigning a user to that block, where the user was authorized for each step in the block.
The existence of such a partition and allocation of users to blocks demonstrates that a workflow specification is satisfiable.
This method assumes that the allocation of a user to one particular block is independent of the allocation of users to other blocks: this assumption holds for separation- and binding-of-duty constraints; however, it does not hold for seniority constraints because the choice of a senior user for one block may limit the choices of user available for other blocks.

Constraints of the form $(\rho,s',s'')$ have been called Type 1 constraints~\cite{CrGuYe12}, and were formally introduced by Crampton~\cite{Cr05}.
Wang and Li introduced Type 2 constraints~\cite{WangLi10}, which have the form $(\rho,s',S')$, where $S' \subseteq S$ and the constraint is satisfied by plan $\pi$ if there exists $s'' \in S'$ such that $(\pi(s'),\pi(s'')) \in \rho$.
Finally, Crampton {\em et al.} defined Type 3 constraints~\cite{CrGuYe12}, which have the form $(\rho,S',S'')$, where $S',S'' \subseteq S$ and the constraint is satisfied if there exist $s' \in S'$ and $s'' \in S''$ such that $(\pi(s'),\pi(s'')) \in \rho$.

Crampton {\em et al.} \cite{CrGuYeJournal} showed that it is possible to rewrite a workflow specification containing Type 2 or Type 3 constraints as a collection of workflow specifications, each containing Type 1 constraints only.
Moreover, the number of workflow specifications is determined by $k$ (the number of steps) only, which means that the existence of an FPT algorithm for Type 1 constraints can be used to establish the existence of an FPT algorithm for specifications containing any combination of Type 1, 2 or 3 constraints.
In this paper, we demonstrate the existence of an FPT algorithm for Type 1 constraints containing the $<$ relation provided the reduced graph $D$ is of bounded treewidth.
The prior work of Crampton {\em et al.} \cite{CrGuYeJournal}  enables us to construct an FPT algorithm for Type 2 and 3 constraints.

\section{Tree Decompositions and Treewidth}\label{sec:tree-decomposition}

Tree decompositions provide a means of representing a (directed) graph using a tree.
Subsets of the graph's vertices form the nodes of the tree, in such a way that a subtree containing a particular vertex is connected and the subtrees associated with the end-points of an edge in the graph have nonempty intersection.
The treewidth of a graph $G$ is a measure of the minimum number of vertices that are required in each node of a tree in order to construct a tree decomposition of $G$.
Treewidth is known to be an important parameter when considering the complexity of graph-related problems that are NP-hard in general.
As we will see, treewidth plays an important role in the complexity of the workflow satisfiability problem when we define a transitive relation $<$ on $U$ and define workflow constraints in terms of $<$.

\begin{definition}\label{TDdef}
 A \emph{tree decomposition} of a (directed) graph $G = (V,E)$ is a pair $(\mathcal{T}, \mathcal{X})$, where $\mathcal{T} = (V_\mathcal{T}, E_\mathcal{T})$ is a tree and $\mathcal{X} = \{ \mathcal{B}_i: i \in V_\mathcal{T}\}$ is a collection of subsets of $V$ called \emph{bags}, such that
\begin{enumerate}
 \item $\bigcup_{i \in V_\mathcal{T}} \mathcal{B}_i  = V$.
 \item For every edge (arc) $xy \in E$, there exists $i \in V_\mathcal{T}$ such that $\{x,y\} \subseteq \mathcal{B}_i$.
 \item For every $x \in V$, the set $\{i: x \in \mathcal{B}_i\}$ induces a connected subtree of $\mathcal{T}$.
\end{enumerate}
The \emph{width} of $(\mathcal{T}, \mathcal{X})$ is $\max_{i \in V_\mathcal{T}}|\mathcal{B}_i|-1$. The \emph{treewidth} of a graph $G$ is the minimum width of all tree decompositions of $G$.
\end{definition}

To distinguish between vertices of $G$ and $\cal T$, we call vertices of $\cal T$ {\em nodes}.
We will often speak of a bag $\mathcal{B}$ interchangeably with the node it corresponds to in $\mathcal{T}$. Thus, for example, we might say two bags $\mathcal{B}, \mathcal{B}'$ are neighbors if they correspond to nodes in $\mathcal{T}$ which are neighbors.
We define the \emph{descendants} of a bag $\mathcal{B}$ as follows: every child of $\mathcal{B}$ is a descendant of $\mathcal{B}$, and every child of a descendant of $\mathcal{B}$ is a descendant of $\mathcal{B}$.
At the same time, we will say $\mathcal{B} = \mathcal{B'}$ if $\mathcal{B}, \mathcal{B}'$ contain the same vertices, while still treating them as different bags.

\vspace{2 mm}

It is well-known that a connected graph is of treewidth 1 if and only if it is a tree with at least one edge \cite{Klo94}. Every tree $T$ with at least one edge has the following tree decomposition $\cal T$ of width 1:
for every vertex $x$ of $T$ let $\{x\}$ be a bag of $\cal T$ and for every edge $xy$ of $T$ let $\{x,y\}$ be a bag
of $\cal T$. Two bags are adjacent in $\cal T$ if one of them is a proper subset of the other.
For the graph depicted in Fig. \ref{fig:example-workflow} (b) there is a tree decomposition of width 2: it has two bags $\{\appc,\prepc,\prepp\}$ and $\{\appc,\appp,\prepp\}$ connected by an edge.

The graph of Fig.~\ref{fig:example-rhul} has a tree decomposition of width 3, as shown in Fig.~\ref{fig:example-rhul-tree}.
The graph of Fig.~\ref{fig:example-rhul}  can be extended as follows to more fully reflect the Royal Holloway management hierarchy. Each faculty at Royal Holloway has several academic departments each led by Head of Department (HoD) and so we may add HoD's, each with an arc to the corresponding Dean, and non-HoD members of staff, each with an arc to the corresponding HoD. This extension of the graph of Fig.~\ref{fig:example-rhul} essentially adds just trees to the graph and it is not hard to check that the treewidth of the extended graph is still 3.

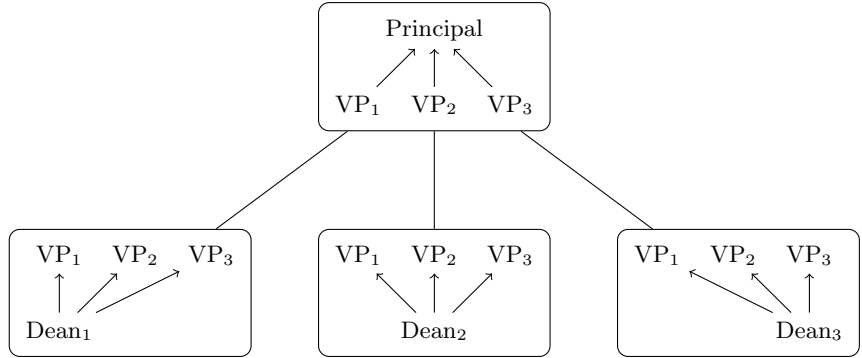
\begin{figure}\centering
\begin{tikzpicture}
\node[shape=rectangle,rounded corners,draw] (Root) at (0,0){
\begin{tikzpicture}
\node (P) at (1,0) {Principal};
\node (VP1) at (0,-1) {VP$_1$} edge [->] (P);
\node (VP2) at (1,-1) {VP$_2$} edge [->] (P);
\node (VP3) at (2,-1) {VP$_3$} edge [->] (P);
\end{tikzpicture}
};

\node[shape=rectangle,rounded corners,draw] (D1Bag) at (-4,-3){
\begin{tikzpicture}
\node (VP1) at (0,0) {VP$_1$};
\node (VP2) at (1,0) {VP$_2$};
\node (VP3) at (2,0) {VP$_3$};
\node (D1) at (0,-1) {Dean$_1$} edge [->] (VP1) edge [->] (VP2) edge [->] (VP3);
\end{tikzpicture}
};

\node[shape=rectangle,rounded corners,draw] (D2Bag) at (0,-3){
\begin{tikzpicture}
\node (VP1) at (0,0) {VP$_1$};
\node (VP2) at (1,0) {VP$_2$};
\node (VP3) at (2,0) {VP$_3$};
\node (D2) at (1,-1) {Dean$_2$} edge [->] (VP1) edge [->] (VP2) edge [->] (VP3);
\end{tikzpicture}
};

\node[shape=rectangle,rounded corners,draw] (D3Bag) at (4,-3){
\begin{tikzpicture}
\node (VP1) at (0,0) {VP$_1$};
\node (VP2) at (1,0) {VP$_2$};
\node (VP3) at (2,0) {VP$_3$};
\node (D3) at (2,-1) {Dean$_3$} edge [->] (VP1) edge [->] (VP2) edge [->] (VP3);
\end{tikzpicture}
};

\draw (Root) -- (D1Bag);
\draw (Root) -- (D2Bag);
\draw (Root) -- (D3Bag);
\end{tikzpicture}
\caption{Tree Decomposition of Royal Holloway management hierarchy}\label{fig:example-rhul-tree}
\end{figure}

The Royal Holloway management hierarchy is not exceptional in the following sense: it is unlikely that a member of staff will have many line managers (quite often there is only one line manager). Thus, it does not seem unreasonable to expect the reduced graph of the corresponding partial order to have bounded treewidth and for the treewidth to be rather small. Moreover, our Royal Holloway example indicates that construction of (near-)optimal tree decompositions for such hierarchies may be not hard.

\vspace{2mm}

It is NP-complete to decide whether the treewidth of a graph $G$ is at most $r$ (when $r$ is part of input) \cite{ArnCorPro87}.
Bodlaender \cite{Bod96} obtained an algorithm with running time $O(f(r)n)$ for deciding whether the treewidth of a graph $G$ is at most $r$, where $n$ is the number of vertices in $G$ and $f$ is a function depending only on $r$. This algorithm constructs the corresponding tree decomposition with $O(n)$ nodes, if the answer is {\sc Yes}. Unfortunately, $f$ grows too fast to be of practical interest. However, there are several polynomial-time approximation algorithms and heuristics for computing the treewidth of a graph  and the corresponding tree decomposition, see, e.g., \cite{BodKos10}.

%
%
%
%
%
%

We now describe a special type of tree decomposition
that is widely used to construct dynamic programming
algorithms for solving problems on graphs, called a {\em nice tree decomposition}.
In a nice tree decomposition, one node in $\cal T$ is considered
to be the root of $\cal T$, and each node $i \in V_\mathcal{T}$ is of one of the
following four types:

\begin{enumerate}
 \item a {\em join node} $\mathcal{B}$ has two children $\mathcal{B}'$ and $\mathcal{B}''$, with $\mathcal{B}=\mathcal{B}'=\mathcal{B}''$;
 \item a {\em forget node} $\mathcal{B}$ has one child $\mathcal{B}'$, and there exists $u \in \mathcal{B}'$ such that  $\mathcal{B} = \mathcal{B}' \setminus \{u\}$;
 \item an {\em introduce node} $\mathcal{B}$ has one child $\mathcal{B}'$, and there exists $u \not\in \mathcal{B}'$ such that $\mathcal{B}= \mathcal{B}' \cup \{u\}$;
 \item a {\em leaf node} $\mathcal{B}$ is a leaf of $\cal T$.
\end{enumerate}

The following useful lemma, concerning the construction of a nice tree decomposition from a given tree decomposition, was proved by Kloks \cite[Lemma 13.1.3]{Klo94}.

 \begin{lemma} \label{lem:niceform}
Given a tree decomposition with $O(n)$ nodes of a graph $G$ with $n$ vertices, we can construct, in time $O(n)$, a nice tree decomposition of $G$ of the same width and with at most $4n$ nodes.
\end{lemma}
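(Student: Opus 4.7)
The plan is to transform the given tree decomposition $(\mathcal{T},\mathcal{X})$ into a nice one by a sequence of local modifications, each preserving the three properties of Definition~\ref{TDdef} and not increasing the width. I would first root $\mathcal{T}$ at an arbitrary node so that each non-root node has a well-defined parent, and then apply three rewriting phases.

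\emph{Binarize.} For every internal node with $d\ge 3$ children, replace it by a small binary tree whose $d-1$ new internal vertices all carry a duplicate of the original bag; these duplicates will become the join nodes of the final decomposition. \emph{Refine edges.} For each remaining parent-child pair $(\mathcal{B},\mathcal{B}')$, subdivide the tree edge by a chain of bags in which consecutive bags differ by exactly one vertex: going upward from $\mathcal{B}'$, first introduce the vertices of $\mathcal{B}\setminus\mathcal{B}'$ one at a time, and then forget the vertices of $\mathcal{B}'\setminus\mathcal{B}$ one at a time. Each bag in the chain is a subset of $\mathcal{B}\cup\mathcal{B}'$, so the width does not grow, and the duplicates produced by the first phase become genuine join nodes because their two children now share their bag. \emph{Fix leaves.} Any leaf whose bag is not a singleton is extended downward by a chain of introduce nodes ending in a singleton bag.

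Correctness is routine: no vertex is ever removed from the union of bags; for each graph edge $xy$ a bag containing both endpoints still exists along the refined path; and for every vertex $x$ the set of bags containing $x$ still induces a connected subtree, because within every inserted chain the occurrences of $x$ form a contiguous interval. After all three phases, every internal node is a join, introduce, or forget node, and every leaf is a singleton, so the output satisfies the definition of a nice tree decomposition.

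The main obstacle is obtaining the bound $4n$ on the number of nodes. The standard preparatory move is to first \emph{compact} the input tree decomposition into a canonical form with at most $n$ nodes, by repeatedly contracting any tree edge whose two endpoint bags are comparable by inclusion; such a contraction preserves the three defining properties and the width. With this canonical starting point, a careful accounting charges the introduce and forget nodes produced in the edge-refinement phase to the first and last appearances of the vertices along the rooted tree, charges the binarization duplicates to the extra children they resolve, and charges the leaf extensions to their respective leaves; summing these contributions yields the stated $4n$ bound. Since every modification is local and touches only a constant number of nodes, the entire construction runs in $O(n)$ time.
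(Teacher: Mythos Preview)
The paper does not give its own proof of this lemma; it simply cites Kloks~\cite[Lemma~13.1.3]{Klo94}. Your sketch follows the standard construction found there, so in spirit you are doing what the reference does.

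There is, however, a genuine error in your edge-refinement phase. You write that, going upward from the child bag $\mathcal{B}'$ to the parent bag $\mathcal{B}$, you ``first introduce the vertices of $\mathcal{B}\setminus\mathcal{B}'$ one at a time, and then forget the vertices of $\mathcal{B}'\setminus\mathcal{B}$ one at a time,'' and you justify width-preservation by saying each intermediate bag is a subset of $\mathcal{B}\cup\mathcal{B}'$. That justification does not work: after introducing all of $\mathcal{B}\setminus\mathcal{B}'$ you reach the bag $\mathcal{B}\cup\mathcal{B}'$ itself, and $|\mathcal{B}\cup\mathcal{B}'|$ can strictly exceed $\max(|\mathcal{B}|,|\mathcal{B}'|)$. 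For instance, if $\mathcal{B}=\{1,2,3\}$ and $\mathcal{B}'=\{3,4,5\}$ in a width-$2$ decomposition, your chain produces a bag of size $5$. The fix is to reverse the order: first forget the vertices of $\mathcal{B}'\setminus\mathcal{B}$ (passing through $\mathcal{B}\cap\mathcal{B}'$) and only then introduce the vertices of $\mathcal{B}\setminus\mathcal{B}'$. Every intermediate bag is then contained in either $\mathcal{B}$ or $\mathcal{B}'$, and the width is genuinely preserved.

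A minor remark: the paper's definition of a nice tree decomposition does not require leaf bags to be singletons (a leaf node is simply a leaf of $\mathcal{T}$), so your third phase is unnecessary here, though harmless. Your accounting for the $4n$ bound is only sketched; once the refinement order is corrected, the standard charging argument from Kloks goes through.
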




\begin{lemma}\label{lem:connected}
Let $D$ be a (directed) graph, $(\mathcal{T}, \mathcal{X})$ a tree decomposition of $D$, and let $Y$ be a set of vertices in $D$ such that $D[Y]$ is connected. Then the set of bags containing vertices in $Y$ induces a connected subtree in $\mathcal{T}$.
\end{lemma}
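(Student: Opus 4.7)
The plan is to exploit the two tree-decomposition axioms that tie vertices of $D$ to bags: axiom 2 guarantees that any edge of $D$ has its endpoints inside a common bag, and axiom 3 guarantees that, for any single vertex $y$, the set $T_y = \{\,i \in V_{\mathcal{T}} : y \in \mathcal{B}_i\,\}$ induces a connected subtree of $\mathcal{T}$. What we want is that $N = \{\,i \in V_{\mathcal{T}} : \mathcal{B}_i \cap Y \neq \emptyset\,\} = \bigcup_{y \in Y} T_y$ induces a connected subtree of $\mathcal{T}$. Since each $T_y$ is already connected (and lies inside $N$), the task reduces to showing that for any two vertices $y, y' \in Y$ the subtrees $T_y$ and $T_{y'}$ can be linked by a path that stays inside $N$.

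First I would fix any two nodes $i \in T_y$ and $j \in T_{y'}$ and use the connectedness of $D[Y]$ to choose a walk $y = y_0, y_1, \dots, y_k = y'$ in $D[Y]$. For each consecutive pair $y_\ell y_{\ell+1}$, axiom 2 yields a bag containing both endpoints, so $T_{y_\ell} \cap T_{y_{\ell+1}} \neq \emptyset$. Next I would argue by induction on $\ell$ that the union $U_\ell = T_{y_0} \cup T_{y_1} \cup \cdots \cup T_{y_\ell}$ induces a connected subgraph of $\mathcal{T}$: the base case $\ell = 0$ is axiom 3 applied to $y_0$, and in the inductive step $U_\ell$ and $T_{y_{\ell+1}}$ are each connected subtrees of $\mathcal{T}$ sharing at least one node, so their union is again connected. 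Applying this with $\ell = k$ gives a path in $\mathcal{T}[U_k]$ from $i$ to $j$, and since $U_k \subseteq N$ this path lies in $\mathcal{T}[N]$.

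Combined with the fact that $y$ and $y'$ (and hence $i, j$) were arbitrary, this shows that every pair of nodes in $N$ is joined by a path in $\mathcal{T}[N]$, so $\mathcal{T}[N]$ is connected; being an induced subgraph of a tree, it is a subtree. The only slightly delicate point, and what I would double check, is the inductive step that the union of two connected subtrees of $\mathcal{T}$ with a common node is connected; this is immediate because any node in one of the subtrees is joined to the shared node by a path inside that subtree. No step here looks like a genuine obstacle: the argument is essentially a direct unwinding of axioms 2 and 3 of Definition~\ref{TDdef} along a path witnessing the connectedness of $D[Y]$.
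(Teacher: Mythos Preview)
Your proposal is correct and follows essentially the same approach as the paper: both arguments use axiom~3 to get a connected subtree for each vertex and axiom~2 to show that adjacent vertices' subtrees overlap, then glue these subtrees together inductively. The only cosmetic difference is that the paper inducts on~$|Y|$ by peeling off a single vertex~$y$ whose removal keeps $D[Y\setminus\{y\}]$ connected, whereas you induct along a walk in~$D[Y]$ between two chosen vertices; the underlying mechanism is identical.
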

\begin{proof}
The proof is by induction on $|Y|$. The base case, $|Y|=1$, follows from Definition \ref{TDdef}. Let $y\in Y$
such that $D[Y\setminus \{y\}]$ is connected and suppose that the set of bags containing vertices in $Y\setminus \{y\}$ induces a connected subtree $\mathcal{T}'$ of $\mathcal{T}$. Let $z\in Y$ such that $yz$ is an edge of $D$. By Definition \ref{TDdef}, $y$ and $z$ belong to the same bag $\cal B$ and observe that  $\cal B$ is in $\mathcal{T}'$. Thus, the subtree of $\cal T$ induced by the bags containing $y$ and $\mathcal{T}'$ intersect and so the set of bags containing vertices in $Y$ induces a connected subtree in $\mathcal{T}$.\qed
\end{proof}

\begin{lemma}\label{lem:separator}
 Let $D$ be the reduced graph for $(U,<)$. Let $u,v$ be users and $\mathcal{B}$ a set of users such that $u \neq v$ and $u,v \notin \mathcal{B}$, and $\mathcal{B}$ separates $u$ from $v$ in $D$.  Then $u < v$ if and only if there exists $w \in \mathcal{B}$ such that $u < w$ and $w < v$.
\end{lemma}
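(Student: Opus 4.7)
The plan is to prove the two directions separately; the backward implication is essentially immediate from transitivity, while the forward implication requires unpacking the relationship between the partial order $<$ and directed paths in the reduced graph $D$.

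For the backward direction, suppose $w \in \mathcal{B}$ satisfies $u < w$ and $w < v$. Since $<$ is a partial order, it is transitive, so $u < v$ follows at once. No use of the separator hypothesis is needed here.

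For the forward direction, I would argue as follows. By the definition of the reduced graph, the transitive closure of $D$ is the full graph $H$ of $(U,<)$, so $u < v$ implies that $H$ contains the arc $(u,v)$, which in turn means $D$ contains a directed path $P$ from $u$ to $v$. Now invoke the separator hypothesis: since $\mathcal{B}$ separates $u$ from $v$ in $D$ and $u,v \notin \mathcal{B}$, the path $P$ cannot lie entirely in $D - \mathcal{B}$, so it must visit at least one vertex of $\mathcal{B}$. Pick such a vertex $w \in V(P) \cap \mathcal{B}$; then the prefix of $P$ from $u$ to $w$ is a directed path in $D$, and the suffix from $w$ to $v$ is a directed path in $D$. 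Applying the definition of reduced graph once more in the reverse direction (a directed path in $D$ witnesses an arc in the transitive closure $H$, equivalently the corresponding strict inequality in $<$), we conclude that $u < w$ and $w < v$, as required.

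The only mild obstacle is to be careful that $w \ne u$ and $w \ne v$, which is guaranteed by the assumption $u,v \notin \mathcal{B}$, so the two subpaths are nontrivial and the resulting inequalities $u < w$, $w < v$ are genuine instances of the strict order. Beyond that, the proof is a routine combination of the definitions of Hasse diagram, transitive closure, and separator.
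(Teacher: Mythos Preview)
Your argument is correct and matches the paper's own proof essentially line for line: transitivity for the backward direction, and for the forward direction the observation that $u<v$ yields a directed $u$--$v$ path in $D$, which must pass through some $w\in\mathcal{B}$ by the separator hypothesis, giving $u<w<v$. Your added remark that $u,v\notin\mathcal{B}$ ensures $w\neq u,v$ is a nice clarification the paper leaves implicit.
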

\begin{proof}
 By transitivity, if $u<w<v$ then $u<v$. For the other direction, suppose $u<v$. Then by the definition of $D$ there must exist a directed path from $u$ to $v$ in $D$. Since $\mathcal{B}$ separates $u$ and $v$, this path must contain a user $w$ in $\mathcal{B}$. Therefore $u < w$ and $w < v$.\qed
\end{proof}

%

\section{FPT Algorithm for Bounded Treewidth}\label{sec:bounded-treewidth}

In this section, we consider the special case of the problem when the reduced graph $D$ of $(U,<)$ is of bounded treewidth. In other words, in this section, we assume that the treewidth of $D$ is bounded by a constant $r$. Note that $D$ may have much smaller treewidth than the full graph $H$. For example, when $<$ is a linear order on $U$, then $H$ is a tournament with treewidth $|U|-1$, but $D$ is a directed path with treewidth $1$.

\begin{theorem}\label{thm:treewidth}
Let $\cal I$ be an instance of \wsp\ and let $D$ be the reduced graph of  $(U,<)$. Given a tree decomposition of $D$ of treewidth $r$ and with $O(n)$ nodes, we can solve $\cal I$ in time  $O(nk4^k(r+2+3^{r+1})^k)$, where $k$ is the number of steps and $n$ is the number of users.
\end{theorem}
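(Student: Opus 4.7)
The plan is to build a bottom-up dynamic programming algorithm on a nice tree decomposition of $D$. First I would apply Lemma~\ref{lem:niceform} to convert the input tree decomposition into a nice one of the same width $r$ with at most $4n$ nodes. The preprocessing described in Section~\ref{sec:satisfiability} may already be assumed: it collapses $(=,\cdot,\cdot)$-constraints into supersteps, reducing $\cal I$ to an equivalent instance with only $(\neq,\cdot,\cdot)$- and $(<,\cdot,\cdot)$-constraints.

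At each bag $\mathcal{B}$ I would store a DP table indexed by \emph{signatures} $\alpha:S\to\Sigma(\mathcal{B})$, where
\[
\Sigma(\mathcal{B}) \;=\; \{\mathit{free}\}\,\cup\,\mathcal{B}\,\cup\,T(\mathcal{B}),
\qquad
T(\mathcal{B})=\{(A,B):A,B\subseteq\mathcal{B},\ A\cap B=\emptyset\}.
\]
The intended semantics is that $\alpha(s)=\mathit{free}$ means $s$ has not yet been assigned a user from the subtree processed so far; $\alpha(s)=u\in\mathcal{B}$ means $s$ has been assigned to $u$, which is currently in the bag; and $\alpha(s)=(A,B)\in T(\mathcal{B})$ means $s$ has been assigned to some user $u^{\ast}$ that lies in the processed subtree but is no longer in the bag (already forgotten), with $A=\{w\in\mathcal{B}:u^{\ast}<w\}$ and $B=\{w\in\mathcal{B}:w<u^{\ast}\}$. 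Since each $w\in\mathcal{B}$ has one of three possible relationships to $u^{\ast}$ (namely $u^{\ast}<w$, $w<u^{\ast}$, or incomparable), $|T(\mathcal{B})|=3^{|\mathcal{B}|}\le 3^{r+1}$, so $|\Sigma(\mathcal{B})|\le r+2+3^{r+1}$ and the table has at most $(r+2+3^{r+1})^k$ entries per bag.

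The transitions at the four node types of the nice tree decomposition directly realise these semantics. At a \emph{leaf} I initialise by enumerating valid partial assignments in which each step is either $\mathit{free}$ or assigned to a bag user, and I check authorisation together with those constraints whose endpoints lie entirely in the bag. At an \emph{introduce} node for $u$, each free step with $u\in L(s)$ may become $u$, each type $(A',B')$ on the child bag $\mathcal{B}'$ is updated to a type over $\mathcal{B}=\mathcal{B}'\cup\{u\}$ by deciding whether to place $u$ in the $A$- or $B$-part, and all constraints involving a step newly assigned to $u$ are checked. Whether $u$ enters $A$ or $B$ is exactly the content of Lemma~\ref{lem:separator}: $\mathcal{B}'$ separates the forgotten $u^{\ast}$ from the just-introduced $u$ in $D$, so $u^{\ast}<u$ iff some $w\in A'$ satisfies $w<u$ in the partial order (and dually for $u<u^{\ast}$). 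At a \emph{forget} node for $u$, every signature $u$ is converted into the type $(\{w\in\mathcal{B}:u<w\},\{w\in\mathcal{B}:w<u\})$, $u$ is stripped from all remaining types, and any outstanding constraint involving a just-finished step is checked (its partner is either still in the bag or already a type whose comparison with $u$ is recorded in the type). At a \emph{join} node with $\mathcal{B}=\mathcal{B}_{1}=\mathcal{B}_{2}$, a combined signature $\alpha$ decomposes into child signatures $\alpha_1,\alpha_2$ step by step: $\alpha(s)=\mathit{free}$ forces $\alpha_1(s)=\alpha_2(s)=\mathit{free}$; a type $\alpha(s)=(A,B)$ must come from exactly one side, since any forgotten user lies in exactly one of $V_{\mathcal{B}_1}\setminus\mathcal{B}$ or $V_{\mathcal{B}_2}\setminus\mathcal{B}$; and an in-bag value $\alpha(s)=u$ admits the three combinations $(u,u)$, $(u,\mathit{free})$, $(\mathit{free},u)$.

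For the running time, each of the four transitions can be carried out in $O(k\cdot 4^k)$ time per entry of the output table, since per step at most four child possibilities need to be enumerated (the worst case being a forget, where a parent type can arise from the same type in the child, from the type with $u$ added to $A$, from the type with $u$ added to $B$, or from the signature $u$ that converts exactly into that type). Summed over the at most $(r+2+3^{r+1})^k$ entries per bag and the $O(n)$ bags of the nice decomposition, this yields the claimed bound $O(nk\,4^k(r+2+3^{r+1})^k)$, and $\cal I$ is a Yes-instance iff some root-bag entry assigns a type signature to every step. The step I expect to be the main obstacle is proving that \emph{every} constraint of $\cal I$ is checked exactly once during the DP and that each check is always decidable from the information at hand: for $(\neq,\cdot,\cdot)$ this is easy whenever at least one endpoint is in the bag, since forgotten users are automatically distinct from bag users, but for $(<,\cdot,\cdot)$ the delicate case is when both endpoints are already finished, which has to be avoided by scheduling the check at the latest forget node at which at least one of the two users is still in the bag --- precisely the moment at which Lemma~\ref{lem:separator} guarantees that the type encoding carries all the information needed to verify $<$.
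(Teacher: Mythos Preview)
Your state space is exactly the paper's: the ``free'' symbol plays the role of the paper's implicit subset $T\subseteq S$, an in-bag value $u$ is the paper's $R(x)=u$, and a pair $(A,B)$ is just a re-encoding of the tuple in $\{[<],[>],[\sim]\}^{|\mathcal{B}|}$. The leaf, introduce, and forget transitions you describe also match the paper's (in particular, your observation that at an introduce node the extension of a type to the new vertex is \emph{determined} by Lemma~\ref{lem:separator} is precisely the content of the paper's Conditions~5--7).

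The gap is at the join node. First, the three options $(u,u),(u,\mathit{free}),(\mathit{free},u)$ for an in-bag value are wrong: if you allow $(u,\mathit{free})$, then a constraint between $s$ (assigned to $u$) and some $s'$ whose user lies strictly inside the right subtree is never checked---$s'$ is free on the left and $s$ is free on the right. The paper simply forces $(u,u)$, i.e.\ every in-bag step belongs to both children's $T$. Second, and more importantly, your ``latest forget'' scheduling cannot handle a $(<,s,s')$ constraint when $\pi(s)$ and $\pi(s')$ lie in \emph{different} branches below a join: the final forget of each user occurs strictly inside its own branch, and at that moment the other step is still free in that branch's table, so neither forget sees both endpoints assigned. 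The paper resolves this by an explicit check \emph{at the join node}: for an arc from $s\in X'$ to $s'\in X''$ it requires some $u_i\in\mathcal{B}$ with $\phi(\pi(s),u_i)=[<]$ and $\phi(\pi(s'),u_i)=[>]$, which by Lemma~\ref{lem:separator} (with $\mathcal{B}$ separating the two forgotten users) is equivalent to $\pi(s)<\pi(s')$. Once you add this cross-check at joins (and drop the spurious $(u,\mathit{free})$ options), your argument coincides with the paper's and the stated running time goes through.
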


By Lemma~\ref{lem:niceform}, assume we have a nice tree decomposition  $(\mathcal{T}, \mathcal{X})$ of $D$ of width $r$ and with at most $4n$ nodes.
Henceforth, we assume that we have constructed a nice tree decomposition for the instance $\cal I$.

Before proving the above result, we provide an informal insight into our approach.
Dynamic programming is a well known technique that is used to solve a problem by systematically solving subproblems, each of which may contribute to the solution of other (typically larger or more complex) subproblems. For example, one might solve all subproblems of size $i$, and use these to solve all subproblems of size $i+1$, or one might make use of structural graph properties, such as tree decompositions.

In the case of {\wsp} we use dynamic programming techniques to compute solutions to restricted instances of the original problem instance, and for each of these restricted instances, we construct possible intermediate solutions for each bag in the nice tree decomposition.
Working from the leaves of the decomposition back to the root, we extend intermediate solutions for child nodes to an intermediate solution for the parent node.
The existence of an intermediate solution for the root node, implies the existence of a solution for the original problem instance (Lemma~\ref{lem:plan-implies-yes-instance}).
Then, in Lemma~\ref{lem:complexity-btr-plan}, we establish the complexity of computing an intermediate solution, thereby completing the proof of Theorem~\ref{thm:treewidth}.
%
%
Roughly speaking, for every subset $T$ of the set $S$ of steps, each bag $\mathcal{B}$ in the tree decomposition of $D$, and each step $x$ in $T$, we keep track of which user in $\mathcal{B}$, if any, $x$ is to be assigned to, and otherwise what relation the user assigned to $x$ should have to the users in $\mathcal{B}$. Before proceeding further, we introduce some definitions and notation.

Let us say that $u > v$ if $v < u$, and $u \sim v$ if neither $u < v$
nor $v < u$. Define the \emph{relation} of $v$ to $u$, a function
$\phi(v,u)$ from the set of all pairs of users to the set of three
symbols $[<], [>], [\sim]$, as follows:
\begin{equation*}
\phi(v,u) = \begin{cases} [<] \text{ if } v < u \\
[>] \text{ if } v > u\\
[\sim] \text{ if } v \sim u.\end{cases}
\end{equation*}

For each bag ${\cal B} = \{u_1, u_2, \dots, u_p\}$  in $\mathcal{X}$,
and each user $v \notin \mathcal{B}$, define the \emph{relation} of
$v$ to $\mathcal{B}$, $\mathcal{R}(v, \mathcal{B})$ to be the ordered
tuple $(\phi(v,u_1), \dots, \phi(v, u_p))$.

\begin{definition}
Given a workflow instance $\mathcal{I}$ with constraint graph $G=(S,E)$, a bag $\mathcal{B}$ in the nice tree decomposition of $(U,<)$, a set of steps $T$ and a function $R : T \rightarrow {\cal B} \cup \{[<],[>],[\sim]\}^{|\mathcal{B}|}$, we say $\pi: T \rightarrow U$ is a \emph{$(\mathcal{B},T,R)$-plan} if the following conditions are satisfied:
\begin{enumerate}
 \item $\pi(x)\in L(x)$ for each $x \in T$;
 \item if there is an edge between $x$ and $y$ in $G[T]$, then $\pi(x) \neq \pi(y)$;
 \item if there is an arc from $x$ to $y$ in $G[T]$, then $\pi(x) < \pi(y)$;
 \item for each step $x$, $\pi(x)$ is either a user in ${\cal B}$ or a user in a descendant of ${\cal B}$;
 \item for any $x \in T$, $u \in {\cal B}$, $\pi(x)=u$ if and only if $R(x)=u$;
 \item if $R(x) \notin {\cal B}$, then $\mathcal{R}(\pi(x), \mathcal{B}) = R(x)$.
\end{enumerate}
\end{definition}

$R$ provides a partial allocation of users in $\mathcal{B}$ to steps in $T$; where no user is allocated, $R$ identifies the relationships that must hold between the user that is subsequently allocated to the task and those users in $\mathcal{B}$.
The existence of a $(\mathcal{B},T,R)$-plan means that we can extend $R$ to a full plan $\pi$ by traversing the nice tree decomposition.

We may now define the function that is central to our dynamic programming approach.
For every bag ${\cal B}$ in the tree decomposition of $D$, every subset $T $ of $S$, and every possible function $R: T \rightarrow {\cal B} \cup \{[<],[>],[\sim]\}^{|\mathcal{B}|}$, define $F({\cal B}, T, R)  = \textsc{True}$ if there exists a $(\mathcal{B},T,R)$-plan and \textsc{False} otherwise.



\begin{lemma}\label{lem:plan-implies-yes-instance}
Let ${\cal B}_0$ be the root node in the nice tree decomposition of $D$. Then
$\mathcal{I}$ is a {\sc Yes}-instance if and only if there exists a function
$R: S \rightarrow {\cal B}_0 \cup \{[<],[>],[\sim]\}^{|\mathcal{B}_0|}$
such that $F({\cal B}_0, S, R)  = \textsc{True}$.
\end{lemma}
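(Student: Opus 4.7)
The plan is to prove both directions directly from the definition of a $(\mathcal{B}_0, S, R)$-plan, since the six conditions essentially repackage validity of a plan together with consistency with an auxiliary function $R$ encoding the relation of each assigned user to the root bag.

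For the forward direction, I would suppose $\mathcal{I}$ is a \Yes-instance with valid plan $\pi: S \to U$, and construct the corresponding $R$ by setting $R(x) = \pi(x)$ when $\pi(x) \in \mathcal{B}_0$ and $R(x) = \mathcal{R}(\pi(x), \mathcal{B}_0)$ otherwise. I would then verify each of the six defining conditions. Conditions (1)--(3) are exactly the requirements for $\pi$ to be a valid plan, so they carry over unchanged. Conditions (5) and (6) are immediate from the definition of $R$. For condition (4), since $\mathcal{B}_0$ is the root, every bag of $\mathcal{T}$ is either $\mathcal{B}_0$ itself or a descendant of $\mathcal{B}_0$; by Property~1 of Definition~\ref{TDdef}, every user of $U$ lies in some bag, so $\pi(x)$ automatically satisfies condition (4). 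This exhibits $R$ with $F(\mathcal{B}_0, S, R) = \textsc{True}$.

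For the backward direction, I would suppose some $R$ yields $F(\mathcal{B}_0, S, R) = \textsc{True}$, so that a $(\mathcal{B}_0, S, R)$-plan $\pi: S \to U$ exists. Conditions (1)--(3) in the definition of a $(\mathcal{B}_0, S, R)$-plan state precisely that $\pi(x) \in L(x)$ for every $x \in S$, that $\pi(x) \neq \pi(y)$ for every edge $xy$ of the constraint graph, and that $\pi(x) < \pi(y)$ for every arc from $x$ to $y$. These are exactly the requirements of a valid plan for $\mathcal{I}$, so $\mathcal{I}$ is a \Yes-instance.

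There is no real obstacle here: the lemma is the boundary statement that ties the dynamic programming table to the original problem, and it reduces to unpacking definitions. The only point that is not entirely trivial is condition (4) in the forward direction, which rests on the root bag's descendants (together with $\mathcal{B}_0$ itself) covering all of $U$ via Property~1 of the tree decomposition; this is why the statement is phrased specifically at the root node $\mathcal{B}_0$ with $T = S$ rather than at an intermediate bag.
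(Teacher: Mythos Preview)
Your proposal is correct and follows essentially the same approach as the paper: define $R$ from a valid plan by recording either the assigned user (if in $\mathcal{B}_0$) or its relation tuple, and conversely read off a valid plan from conditions (1)--(3) of a $(\mathcal{B}_0,S,R)$-plan. Your argument is in fact more carefully spelled out than the paper's, which simply asserts that all conditions are satisfied without isolating condition~(4) or invoking Property~1 of the tree decomposition at the root.
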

\begin{proof}
By the first three conditions on  $F({\cal B}_0, S, R)$ being {\sc True} and the definition of the constraint graph $G$, it is clear that if $F({\cal B}_0, S, R)  = \textsc{True}$ for some $R$ then we have a {\sc Yes}-instance.
So now suppose $\mathcal{I}$ is a {\sc Yes}-instance, and let $\pi: S \rightarrow U$ be a valid plan. Then for each $x \in S$, let $R(x)=\pi(x)$ if $\pi(x)\in {\cal B}_0$, and otherwise, let $R(x) = \mathcal{R}(\pi(x),\mathcal{B})$.
Then observe that all the conditions on $F({\cal B}_0, S, R)$ being {\sc True} are satisfied and therefore $F({\cal B}_0, V, R) = \textsc{True}$.
\end{proof}


\begin{lemma}\label{lem:complexity-btr-plan}
We can compute $F({\cal B}, T, R)$ for every bag ${\cal B}$ in ${\cal X}$, every $T \subseteq S$, and every  $R: T \rightarrow {\cal B} \cup \{[<],[>],[\sim]\}^{|{\cal B}|}$ in time $O(nk4^k(r+2+3^{r+1})^k)$.
\end{lemma}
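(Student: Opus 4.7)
The plan is to compute $F$ by bottom-up dynamic programming on the nice tree decomposition, giving a separate recurrence at each of the four node types (leaf, introduce, forget, join) and then aggregating the total running time.

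First I would bound the number of table entries per bag. Since $|\mathcal{B}|\leq r+1$, the codomain $\mathcal{B}\cup\{[<],[>],[\sim]\}^{|\mathcal{B}|}$ has at most $(r+1)+3^{r+1}$ elements; each of the $k$ steps therefore contributes one of $1+(r+1)+3^{r+1}=r+2+3^{r+1}$ states (absent from $T$, mapped to a user of $\mathcal{B}$, or mapped to a relation tuple), giving $(r+2+3^{r+1})^k$ table entries per bag, and Lemma~\ref{lem:niceform} supplies $O(n)$ bags.

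Next I would set up the recurrences. For a leaf bag $\mathcal{B}=\{u\}$, $F(\mathcal{B},T,R)=\textsc{True}$ iff every $x\in T$ satisfies $R(x)=u\in L(x)$ and $G[T]$ is satisfied by the constant plan $x\mapsto u$. For an introduce bag $\mathcal{B}=\mathcal{B}'\cup\{u\}$, the connectedness property of the decomposition forces $u$ to be absent from every descendant of $\mathcal{B}'$, so any step with $R(x)=u$ is plugged in and locally verified against the relevant edges and arcs of $G[T]$, while any step with tuple $R(x)$ projects to a child tuple by deleting the $u$-coordinate; by Lemma~\ref{lem:separator}, applied with $\mathcal{B}'$ separating $u$ from every descendant user, the $u$-coordinate must equal the value forced by the child tuple together with the precomputed relations between $u$ and each element of $\mathcal{B}'$. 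For a forget bag $\mathcal{B}=\mathcal{B}'\setminus\{u\}$, I take the disjunction over child entries $(T,R')$ that project to $(T,R)$: each step either keeps its value in $\mathcal{B}$, extends its tuple by one of three entries for $u$, or (if $\pi(x)=u$) has $R'(x)=u$ with $R(x)$ equal to $\mathcal{R}(u,\mathcal{B})$.

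The main obstacle will be the join node $\mathcal{B}=\mathcal{B}'=\mathcal{B}''$. I would split each parent entry $(T,R)$ into compatible child pairs $(T',R')$ and $(T'',R'')$ by enforcing that every $x\in T$ with $R(x)\in\mathcal{B}$ lies in both $T'$ and $T''$ with $R'(x)=R''(x)=R(x)$ (so that every constraint touching $x$ is checked by at least one child), while every $x$ with tuple $R(x)$ lies in exactly one of $T',T''$ (because descendants of one child outside $\mathcal{B}$ are separated from descendants of the other). For each resulting split I consult $F$ at both children and additionally verify each constraint in $G[T]$ whose endpoints land in different children: inequality constraints hold automatically (the two sets of descendant users are disjoint outside $\mathcal{B}$) and order constraints are determined from $R(x),R(y)$ using Lemma~\ref{lem:separator} with $\mathcal{B}$ as separator. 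Correctness of each recurrence then follows by a straightforward induction from leaves to root, and for the running time I would combine the $O((r+2+3^{r+1})^k)$ entries per bag with a per-entry enumeration of at most $4^k$ compatible child configurations (obtained by independently recording, for each step of $S$, its membership in $T'$ and in $T''$) and an extra $O(k)$ factor for constraint checks, yielding together with $O(n)$ bags the claimed total $O(nk4^k(r+2+3^{r+1})^k)$.
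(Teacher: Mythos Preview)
Your approach is correct and essentially identical to the paper's: bottom-up dynamic programming on the nice tree decomposition, with the same treatment of each of the four node types and the same appeals to Lemma~\ref{lem:separator} at introduce and join nodes to determine order relations through the bag. Two minor discrepancies are worth noting: the paper does not assume leaf bags are singletons (Kloks's lemma gives no such guarantee, so the leaf case must handle an arbitrary $R$ with range in $\mathcal{B}$), and the $4^k$ per-entry enumeration actually arises at the \emph{forget} node (four options per step: the three relation symbols for the forgotten vertex, or $R'(x)=u$), whereas the join node needs only a $2^k$ split of the tuple-valued steps---your parenthetical attributes the $4^k$ to the $T'/T''$ membership at join, which is not where it comes from, though the bound of course still holds.
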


\begin{proof}
We will start by constructing, in advance, a matrix ${\cal L}=[L_{s,u}]_{s\in S, u\in U}$ such that $L_{s,u}=1$ if $u\in L(s)$ and $L_{s,u}=0$, otherwise. This will take time $O(kn)$.
Let $\mathcal{B}$ be in ${\cal X}$, $T$ a subset of $S$, and $R$ a function from $T$ to ${\cal B} \cup \{[<],[>],[\sim]\}^{|{\cal B}|}.$
Recall that every bag $\mathcal{B}$ is either a leaf node, a join node, a forget node or an introduce node. We will consider the four possibilities separately.


\paragraph{\bf $\mathcal{B}$ is a leaf node.}

Since ${\cal B}$ has no descendants, $F({\cal B}, T, R) = \textsc{False}$ unless $R(x) \in {\cal B}$ for every $x \in T$.
So now we may assume $R(x) \in {\cal B}$ for all $x$. But then the only possibility for a $({\cal B}, T, R) $-plan is one in which $\pi(x)=R(x)$ for all $x$. Therefore we may check, in  time $O(k^2)$,
whether this plan satisfies the (first three) conditions on $F({\cal B}, T, R)$ being {\sc True}. (Use matrix $\cal L$ to check that $\pi(x) \in L(x)$ for all $x\in T$.)
If it does,  $F({\cal B}, T, R) = \textsc{True}$, otherwise  $F({\cal B}, T, R) = \textsc{False}$.


\medskip

For the remaining cases, we may assume that $F({\cal B}', T, R)$ has been calculated for every child of ${\cal B}'$ of ${\cal B}$ and every possible $T,R$.


\paragraph{\bf ${\cal B}$ is a forget node.}

Let ${\cal B}' = \{u_1, u_2, \dots, u_p\}$ be the child node of ${\cal B}$ and assume without loss of generality that
${\cal B} = \{u_1, u_2, \dots, u_{p-1}\}$.
For $i \in [p-1]$, let $X_i$ be the set of steps in $T$ with $R(x)=u_i$.

Suppose that $\pi$ is a $({\cal B}, T, R)$-plan.
Then let $R': T \rightarrow {\cal B}' \cup \{[<],[>],[\sim]\}^{|{\cal B}'|}$ be the function such that $R'(x)=\pi(x)$ if $\pi(x) \in {\cal B}'$, and $R'(x)={\cal R}(\pi(x),{\cal B}')$ if $\pi(x) \notin {\cal B}'$. It is clear that $F({\cal B}', T, R') = \textsc{True}$. Now we show some properties of $R$.

Firstly, since $\pi$ is a $({\cal B}, T, R)$-plan, it must be the case that $\pi(x)=R(x)$ if $R(x) \in {\cal B}$ and therefore $R'(x)=R(x)$ if $R(x) \in {\cal B}$.
Secondly, since $\pi$ is a $({\cal B}, T, R)$-plan and $u_p \notin \mathcal{B}$, it must be the case that $\pi(x)=u_p$ only if $R(x) = {\cal R}(u_p, {\cal B})$. Therefore $R'(x)=u_p$ only if $R(x)= {\cal R}(u_p, {\cal B})$.
Finally, for $x \in T$ with $R'(x) \notin {\cal B'}$, let $R(x)=(x_{u_1}, x_{u_2}, \dots, x_{u_{p-1}})$ and let $R'(x) = (x'_{u_1}, x'_{u_2}, \dots, x'_{u_p})$. Since $\pi$ is a $({\cal B}, T, R)$-plan and a $({\cal B}', T, R')$-plan, we must have that $x_{u_i} = \phi(\pi(x),u_i) = x'_{u_i}$ for all $i \in [p-1]$. That is $R(x)$ and $R'(x)$ are the same except that $R'(x)$ has the extra co-ordinate $x'_{u_p}$. It follows that to obtain $R'$ from $R$, we merely need to guess which
$x$ with $R(x) = \mathcal{R}(u_p, \mathcal{B})$ are assigned to $u_p$ by
$R'$, and for all other $x$, what the value of $x_{u_p}$ should be.

Therefore, in order to calculate $F({\cal B}, T, R)$, we may do the following: Try every possible way of partitioning $T \setminus (X_1 \cup X_2 \cup \dots \cup X_{p-1})$ into four sets $X_p, X_{<}, X_{>}, X_{\sim}$, subject to the constraint that $x \in X_p$ only if $R(x)={\cal R}(u_p, {\cal B})$.
For each such partition, construct a function $R': T \rightarrow {\cal B}' \cup \{[<],[>],[\sim]\}^{|{\cal B}'|}$ such that
\begin{enumerate}
 \item $R'(x)=R(x)$ if $R(x)\in{\cal B}$.
 \item $R'(x)=u_p$ if $x \in X_p$.
 \item For all other $x$, let $R(x) = (x_{u_1}, x_{u_2}, \dots, x_{u_{p-1}})$. Then $R'(x) = (x'_{u_1}, x'_{u_2}, \dots, x'_{u_p})$, where $x'_{u_i} = x_{u_i}$ for all $i \in [p-1]$, and $x'_{u_p} = [<]$ if $x \in X_{<}$, $x'_{u_p} = [>]$ if $x \in X_{>}$, and $x'_{u_p} = [\sim]$ if $x \in X_{\sim}$.
\end{enumerate}
and check the value of $F({\cal B}', T, R')$.

By the above argument, if $F({\cal B}, T, R) = \textsc{True}$ then it must be the case that $F({\cal B}', T, R') = \textsc{True}$ for one of the $R'$ constructed in this way. Therefore if $F({\cal B}', T, R') = \textsc{False}$ for all such $R'$, we know that $F({\cal B}, T, R) = \textsc{False}$. Otherwise, if $F({\cal B}', T, R') = \textsc{True}$ for some $R'$, let $\pi$ be a $({\cal B}', T, R')$-plan, and observe that by construction of $R'$ and $X_p$, $\pi$ is a $({\cal B}, T, R)$-plan as well. Therefore $F({\cal B}, T, R) = \textsc{True}$.

Finally, observe that there are at most $4^k$ possible values of $R'$ to check and each $R'$ can be constructed in time $O(k)$, and therefore we can calculate $F({\cal B}, T, R)$ in time $O(k4^k)$.


\paragraph{\bf ${\cal B}$ is an introduce node.}

Let ${\cal B} = \{u_1, u_2, \dots, u_p\}$, let ${\cal B}'$ be the child node of ${\cal B}$ and assume without loss of generality that
${\cal B}' = \{u_1, u_2, \dots, u_{p-1}\}$.
Let $X_p \subseteq T$ be the set of all $x \in T$ with $R(x) = u_p$, and let $T' = T\setminus X_p$.
Define a function $R': T' \rightarrow {\cal B}' \cup \{[<],[>],[\sim]\}^{|{\cal B}'|}$ as follows:

\begin{enumerate}
 \item $R'(x) = R(x)$ if $R(x) \in {\cal B}'$.
 \item For all other $x$, let $R(x) = (x_{u_1}, x_{u_2}, \dots, x_{u_p})$. Then set $R'(x) = (x'_{u_1}, x'_{u_2}, \dots, x'_{u_{p-1}})$, where $x'_{u_i} = x_{u_i}$ for all $i \in [p-1]$.
\end{enumerate}

We will now give eight conditions which are necessary for $F({\cal B}, T, R) = \textsc{True}$. We will then show that these conditions collectively are sufficient for $F({\cal B}, T, R) = \textsc{True}$. Since each of these conditions can be checked in time $O(k^2)$, we will have that $F({\cal B}, T, R)$ can be calculated in time $O(k^2)$.

\emph{Condition $1$: $L_{x,u_p}=1$ for each $x \in X_p$}. This condition is clearly necessary, as for every $({\cal B}, T, R)$-plan $\pi$ we have $\pi(x)=u_p$.

\emph{Condition $2$: $X_p$ is an independent set in $G$}. Since in any $({\cal B}, T, R)$-plan, all steps in $X_p$ must be assigned the same user, any arc  or edge between steps in $X_p$ will not be satisfied.

\emph{Condition $3$: If there exists $x \in X_p$, $y \notin X_p$ with an arc from $y$ to $x$ in $G$, then either $R(y) = u_i$ for some $u_i \in {\cal B}'$ with $u_i < u_p$, or $R(y) = (y_{u_1}, \dots , y_{u_p})$ with $y_{u_p} = [<]$.}
For if not, then any $({\cal B}, T, R)$-plan will assign $y$ to a user $v$ such that $v > u_p$ or $v \sim u_p$, and the arc $yx$ will not be satisfied.

\emph{Condition $4$: If there exists $x \in X_p$, $y \notin X_p$ with an arc from $x$ to $y$ in $G$, then either $R(y) = u_i$ for some $u_i \in {\cal B}'$ with $u_i > u_p$, or $R(y) = (y_{u_1}, \dots , y_{u_p})$ with $u_p = [>]$.}
The proof is similar to the proof of Condition $3$.

\emph{Condition $5$: If there exists $y \notin X_p$ such that $R(y) = (y_{u_1}, \dots , y_{u_p})$ with $y_{u_p} = [<]$, then there must exist $u_i \in {\cal B}'$ with $y_{u_i} = [<]$ and $u_i < u_p$.}
For suppose there is a $({\cal B}, T, R)$-plan $\pi$, and let $v=\pi(y)$. Note that $v$ must be in a descendant of ${\cal B}$ but not in ${\cal B}'$. Therefore  ${\cal B}'$ separates $v$ from $u_p$ in $D$, for any $v$ in a descendant of ${\cal B}$. (This follows from Lemma \ref{lem:connected} where $Y$ is the vertices of a path between $v$ and $u_p$). Then by Lemma \ref{lem:separator}, as $v < u_p$ there exists $u_i \in {\cal B}'$ with $v < u_i < u_p$.
Therefore $y_{v_i} = [<]$.

\emph{Condition $6$: If there exists $y \notin X_p$ such that $R(y) = (y_{u_1}, \dots , y_{u_p})$ with $y_{u_p} = [>]$, then there must exist $u_i \in {\cal B}'$ with $y_{u_i} = [>]$ and $u_i > u_p$.}
The proof is similar to the proof of Condition $5$.

\emph{Condition $7$: If there exists $y \notin X_p$ such that $R(y) = (y_{u_1}, \dots , y_{u_p})$ with $y_{u_p} = [\sim]$, then there is no $u_i \in {\cal B}$ such that $y_{u_i} = [<]$ and $u_i < u_p$, or  $y_{u_i} = [>]$ and $u_i > u_p$.}
For suppose there is a $({\cal B}, T, R)$-plan $\pi$, and let $v=\pi(y)$. Suppose for a contradiction that there exists  $u_i \in {\cal B}$ such that $y_{u_i} = [>]$ and $u_i >u_p$. (The case $y_{u_i} = [<]$ and $u_i < u_p$ is handled similarly). Then $v > u_i$ and so by transitivity, $v > u_p$. But this is a contradiction as $y_{u_p} = [\sim]$.

\emph{Condition $8$: $F({\cal B}', T', R') = \textsc{True}$.}
For suppose $\pi$ is a $({\cal B}, T, R)$-plan. Then observe that by construction of $R'$, $\pi$ restricted to $T'$ is a $({\cal B}', T', R')$-plan.

It now remains to show that if Conditions $1$-$8$ hold then $({\cal B}, T, R) = \textsc{True}$. Let $\pi'$ be a $({\cal B}', T', R')$-plan whose existence is guaranteed by Condition $8$, and let $\pi$ be the extension of $\pi'$ to $T$ in which $\pi(x)=u_p$ for all $x \in X_p = T \setminus T'$.
Since $\pi'$ is a $({\cal B}', T', R')$-plan, $\pi(x)\in L(x)$ for all $x \in T'$, and by Condition $1$, $\pi(x) \in L(x)$ for all $x \in X_p$.
For every $x$ with $R(x) \in {\cal B}$, we have that $\pi(x)=R(x)$ by the fact that $\pi'$ is a $({\cal B}', T', R')$-plan and $R(x) = u_p$ for all $x \in X_p$.

Now consider $x$ with $R(x) \notin {\cal B}$. Then let $R(x) = (x_{u_1}, x_{u_2}, \dots, x_{u_p})$. By construction of $R'$ and the fact that $\pi'$ is a $({\cal B}', T', R')$-plan, $\phi(\pi(x), u_i)=x_{u_i}$ for $i \in [p-1]$. Suppose $x_{u_p} = [<]$. Then by Condition $5$, there exists $u_i \in {\cal B}'$ with $x_{u_i} = [<]$ and $u_i < u_p$. Therefore $\pi(x) < u_i$ and so $\pi(x) < u_p$. Therefore $\phi(\pi(x), u_i)=[<]$. Similarly, using Condition $6$, if $x_{u_p}= [>]$ then $\phi(\pi(x), {\cal B}) = [>]$. If $\phi(\pi(x), u_i)=[\sim]$ then by Condition $7$ there is no $u_i \in {\cal B}'$ with $\pi(x) > u_i > u_p$ or $\pi(x) < u_i < u_p$. Then by Lemma \ref{lem:separator}, $\pi(x)\sim u_p$ and so $\phi(\pi(x), u_p) = [\sim]$.
In each case we have that $\phi(\pi(x), u_p)=x_{u_p}$ and so ${\cal R}(\pi(x), {\cal B}) = R(x)$.

It is clear that for each step $x$, $\pi(x)$ is either in $\mathcal{B}$ or in a descendant of $\mathcal{B}$.
It remains to show that the arcs and edges in $G[T]$ are satisfied by $\pi$.

 As $\pi'$ is a $({\cal B}', T', R')$-plan, every arc and edge in $G[T']$ is satisfied by $\pi$. By Condition $2$ there are no edges and arcs within $G[X_p]$. It remains to show that the arcs and edges between $X_p$ and $T'$ are satisfied by $\pi$. Consider an edge  between $x \in X_p$ and $y \in T'$. Since $\pi(x)=u_p$, and $\pi(y)\neq u_p$ (since $u_p$ does not appear in ${\cal B}'$ or any descendant of ${\cal B}'$ by definition of a tree decomposition), this edge is satisfied. Now suppose there is an arc from $y \in T'$ to $x \in X_p$ . By Condition $3$, either $\pi(y)=R(y)=u_i$ with $u_i < u_p$, or $y_{u_p}=[<]$, in which case $\pi(y) < u_p$ (as we have shown $\phi(\pi(y, u_p)=y_{u_p}$). In either case $\pi(y) < \pi(x)$ and so the arc is satisfied. Similarly, if there is an arc from $x \in X_p$ to $y \in S'$, then by Condition $4$ $\pi(y) > \pi(x)$ and the arc is satisfied.

Thus $\pi$ satisfies all the conditions of a $(\mathcal{B},T,R)$-plan and so $F({\cal B}, T, R)=\textsc{True}$.


\paragraph{\bf ${\cal B}$ is a join node.}
Let ${\cal B}', {\cal B}''$ be the two child nodes of ${\cal B}$, and recall that ${\cal B}'$ and ${\cal B}''$ contain the same users as ${\cal B}$.
Let $X$ be the set of all $x \in T$ with $R(x) \in {\cal B}$.

Let $\pi$ be a $({\cal B}, T, R)$-plan. Then let $X'$ be the set of all $x \in T \setminus X$ such that $\pi(x)=v$ for some $v$ in a descendant of ${\cal B}'$, and let $X''$ be the set of all $x \in T \setminus X$ such that $\pi(x)=v$ for some $v$ in a descendant of ${\cal B}''$. (Observe that $X,X',X''$ is a partition of $T$.) Let $T'=X \cup X'$ and let $R'$ be the function $R$ restricted to $T'$.
Similarly let $T''=X \cup X''$ and let $R''$ be the function $R$ restricted to $T''$.
Then observe that $F({\cal B}', T', R')=\textsc{True}$ and $F({\cal B}'', T'', R'')=\textsc{True}$.

Now consider an arc from $x \in X'$ to $y \in X''$. Then $\pi(x) < \pi(y)$. Since ${\cal B}$ separates $\pi(x)$ from $\pi(y)$ (by Lemma \ref{lem:connected} with $Y$ the set of vertices on a path between $\pi(x)$ and $\pi(y)$), there must exist $u_i \in {\cal B}$ such that $\pi(x) < u_i < \pi(y)$. Therefore $x_{u_i}=[<]$ and $y_{u_i}=[>]$.
Similarly, if there is an arc from $y \in X''$ to $x \in X'$ then there exists $u_i \in {\cal B}$ with $x_{u_i}=[<]$ and $y_{u_i}=[>]$.

We therefore have that if $F({\cal B}, T, R)=\textsc{True}$, then there exists a partition $X',X''$ of $T \setminus X$ such that $F({\cal B}', T', R')=\textsc{True}$ and $F({\cal B}'', T'', R'')=\textsc{True}$ (where $T',T'',R',R''$ are as previously defined) and for any arc from $x \in X'$ to $y \in X''$, there exists $u_i \in {\cal B}$ with $x_{u_i}=[<]$ and $y_{u_i}=[>]$ (and similarly for arcs from $y \in X''$ to $x \in X'$). We now show that the converse is true.

Suppose these conditions hold, and let $\pi'$ be a $({\cal B}', T', R')$-plan and $\pi''$ a $({\cal B}'', T'', R'')$-plan.
Note that for all $x \in X$, $\pi'(x)=R(x)=\pi''(x)$.
Let $\pi$ be the assignment on $S$ made by combining $\pi'$ and $\pi''$, i.e. $\pi(x) = \pi'(x)=\pi''(x)$ for $x \in X$, $\pi(x)=\pi'(x)$ for $x \in X'$, and $\pi(x)=\pi''(x)$ for $x \in X''$.

Observe that by definition of $\pi'$ and $\pi''$, $L_{x, \pi(x)} = 1$ for all $x \in T$, $\pi(x)=R(x)$ if $R(x) \in {\cal B}$, and otherwise ${\cal R}(\pi(x),{\cal B})=R(x)$. Any edges and arcs in $G[X \cup X']$ are satisfied by $\pi$, by definition of $\pi'$, and any edges and arcs in $G[X \cup X'']$ are satisfied by $\pi$, by definition of $\pi''$. It remains to consider the edges and arcs between $X'$ and $X''$. Since the tasks in $X'$ and $X''$ are assigned to disjoint sets of users (by Lemma \ref{lem:connected}), any edge  between and $X'$ and $X''$ is satisfied. If there is an arc from $x \in X'$ to $y \in X''$, then by our assumption there exists $u_i \in {\cal B}$ with $x_{u_i}=[<]$ and $y_{u_i}=[>]$. Therefore $\pi(x)< u_i < \pi(y)$, and therefore $\pi(x)< \pi(y)$, and so the arc is satisfied. A similar argument applies when there is an arc from $y \in X''$ to $x \in X'$.

Since there are at most $2^{|T|}$ possible ways to partition $T \setminus X$ into $X'$ and $X''$, we can calculate $F({\cal B}, T, R)$ in $O(2^k)$ time.

\medskip

The above bounds show that, provided all the values for descendants of ${\cal B}$ have been computed, $F({\cal B}, T, R)$ can be calculated in time $O(k4^k)$, for each possible ${\cal B}, T$ and $R$. It remains to count the number of possible values of ${\cal B}, T$ and $R$.
There are at most $4n$ values of ${\cal B}$. Calculating $F({\cal B},T,R)$ for every $T$ and $R$ can be viewed as calculating $F$ for
every function $R^*: S \rightarrow {\cal B} \cup \{[<],[>],[\sim]\}^{|{\cal B}|}\cup \{0\}$, $T$ being defined as the set of steps not mapped to $0$. Finally, for each step $x$ in $S$ there are $r+2+3^{r+1}$ possible values for $R^*(x)$ and therefore $(r+2+3^{r+1})^k$ possible values for $R^*$. Therefore the total number of possible values of $F({\cal B}, T, R)$ is $O(n(r+2+3^{r+1})^k)$, and so every value $F({\cal B}, T, R)$ can be calculated in time $O(nk4^k(r+2+3^{r+1})^k)$.\qed

\end{proof}

\section{Hardness}\label{sec:hardness}

The main theorem of this section establishes a lower bound for the complexity of the workflow satisfiability problem. In fact, we show that in general, the trivial $O(n^k)$ algorithm is nearly optimal. Our result assumes the Exponential Time Hypothesis ({\ETH}) of Impagliazzo, Paturi, and Zane~\cite{ImPaZa01}: that is, we assume that there is no $2^{o(n)}$-time algorithm for $n$-variable 3-SAT.

\begin{theorem}\label{thm:wsp hardness}
{\wsp} cannot be solved in time $f(k)n^{o({\frac {k} {\log k}})}$ unless {\ETH} fails, where $f$ is an arbitrary function, $k$ is the number of steps and $n$ is the number of users. This results holds even if the full graph of $(U,<)$ is 2-degenerate.
\end{theorem}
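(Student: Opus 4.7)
The plan is to obtain the lower bound by reducing from a parameterized problem whose $n^{\Omega(k/\log k)}$ hardness under ETH is already known. A natural candidate is Partitioned Subgraph Isomorphism (PSI): given a pattern $H$ with $k$ vertices and a host graph $G$ whose vertices are colored by $V(H)$, decide whether $H$ embeds colorfully into $G$. Marx's theorem asserts that PSI cannot be solved in time $f(k)n^{o(k/\log k)}$ under ETH, so it suffices to translate a PSI instance into a \wsp\ instance with matching parameters, additionally ensuring that the full graph of $(U,<)$ is $2$-degenerate.

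First I would introduce one step $s_v$ per pattern vertex $v\in V(H)$ and let $L(s_v)$ be the set of host vertices colored $v$; this gives $k'=k$ steps and $n'=|V(G)|=n$ users, already in the right parameter regime. Each pattern edge $uv\in E(H)$ must then be encoded as a binary constraint forcing the users assigned to $s_u$ and $s_v$ to be adjacent in $G$. Using only the relations $=,\neq,<$, the idea is to orient $E(G)$ acyclically, declare $x<y$ for every oriented edge, and replace each $uv\in E(H)$ by a constraint of the form $(<,s_u,s_v)$ (together with $\neq$ constraints on auxiliary steps to cover both orientations of the pattern edge).

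The main obstacle is that a generic host graph $G$ is not $2$-degenerate, so simply letting $<$ be an orientation of $G$ fails the theorem's structural restriction. To overcome this, I would subdivide each host edge by inserting short gadgets of auxiliary users and a constant number of auxiliary steps whose authorization lists force a consistent ``edge selection.'' Each gadget contributes a tree-like subgraph to the Hasse diagram, and attaching such gadgets at each endpoint raises the vertex degrees only by $O(1)$, so a careful construction keeps the full graph $2$-degenerate. The number of added steps is $O(|E(H)|)=O(k^2)$ and the number of added users is polynomial in $n$, so $k'=O(k^2)$ and $n'=n^{O(1)}$; an $f(k')(n')^{o(k'/\log k')}$-time algorithm would therefore yield an $f'(k)n^{o(k/\log k)}$-time algorithm for PSI, contradicting ETH.

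The delicate step will be designing the gadget so that (i) its constraints are realizable with $=,\neq,<$ alone, (ii) each gadget is itself acyclically $2$-degenerate and attaches to the rest of the Hasse diagram without creating a vertex of degree exceeding $2$ in some subgraph, and (iii) a valid plan on the gadget steps exists if and only if the selected endpoint users correspond to an edge of $G$. Once the gadget is in place, correctness in both directions follows by standard bookkeeping: a colorful embedding of $H$ into $G$ yields a valid plan by assigning chosen host vertices to the pattern steps and extending consistently through each edge gadget, while any valid plan projects onto a colorful embedding by reading off the $s_v$ assignments. The overall running-time bookkeeping then propagates Marx's $n^{\Omega(k/\log k)}$ lower bound to \wsp, proving the theorem.
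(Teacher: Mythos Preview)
Your high-level strategy matches the paper's: reduce from Partitioned Subgraph Isomorphism and invoke Marx's $n^{\Omega(k/\log k)}$ lower bound, using a subdivision trick to force $2$-degeneracy. Where your proposal stays vague, however, the paper is much simpler and avoids any bespoke gadgetry. The paper subdivides \emph{both} the host and the pattern, orienting every resulting edge toward the new subdivision vertex. This single move buys everything at once: there are no directed paths of length two, so the digraph is already transitive and the partial order $<$ coincides with the arc set (no spurious comparabilities from transitive closure); each pattern edge $uv$ becomes a pair of constraints $(<,s_u,s_{uv})$ and $(<,s_v,s_{uv})$ on a new step $s_{uv}$, so there is no orientation ambiguity to ``cover'' with $\neq$ constraints; and $2$-degeneracy holds because any induced subgraph either contains a subdivision vertex (degree $\le 2$) or contains only original vertices, between which there are no arcs at all. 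The paper phrases this via an intermediate problem (\textsc{SubTDAG Isomorphism}) but that is just packaging.

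Two concrete issues in your write-up are worth flagging. First, your sentence ``attaching such gadgets at each endpoint raises the vertex degrees only by $O(1)$'' is false: a host vertex of degree $d$ will be incident to $d$ subdivision gadgets, so its degree is not bounded. Two-degeneracy does \emph{not} require bounded degree; it follows from the bipartite-like structure of a subdivided graph as explained above, and your argument should use that instead. Second, you never address the transitivity of $<$: if you merely ``orient $E(G)$ acyclically and declare $x<y$ for every oriented edge,'' then $<$ must be the transitive closure, so a constraint $(<,s_u,s_v)$ is satisfied by any reachable pair, not just adjacent ones, and the reduction is unsound. Orienting toward subdivision vertices is precisely what kills directed paths of length two and makes the arc relation itself a partial order. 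Once you adopt this clean subdivision on both sides, the auxiliary steps you need are exactly one per pattern edge, the parameter becomes $\Theta(|E(H)|)$ (matching Marx's parameter), and the rest of your bookkeeping goes through.
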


The proof of Theorem~\ref{thm:wsp hardness} can be found in the appendix.
It is well-known (see, e.g., \cite{FluGro06}) that {\ETH} is stronger than the widely believed complexity hypothesis $\text{W[1]} \neq \text{FPT}$. Thus, we have the following:

\begin{corollary}
{\wsp} is not FPT unless $\text{W[1]} = \text{FPT}$. This results holds even if the full graph of $(U,<)$ is 2-degenerate.
\end{corollary}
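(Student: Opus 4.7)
The plan is to obtain the corollary as an immediate by-product of the reduction used to prove Theorem~\ref{thm:wsp hardness}, rather than trying to derive it logically from the statement of that theorem alone. The statement of Theorem~\ref{thm:wsp hardness} is phrased as a conditional lower bound under \ETH, but the construction behind it is (as is standard for such ETH-based lower bounds) a parameterized reduction from a W[1]-hard source problem. A parameterized reduction from a W[1]-hard problem to \wsp\ immediately yields W[1]-hardness of \wsp, and the corollary follows.

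Concretely, I would open the proof of Theorem~\ref{thm:wsp hardness} in the appendix and identify the source problem $\Pi$. An ETH lower bound of the shape $f(k) n^{o(k/\log k)}$ for a parameter $k$ is the signature of a reduction from a W[1]-hard problem such as \textsc{Partitioned Subgraph Isomorphism} or \textsc{Multicoloured Clique}, via Marx's framework; these sources are W[1]-hard parameterised by the natural parameter (the number of edges, resp.\ the clique size). I would then verify the two properties of a parameterized reduction on the construction used in Theorem~\ref{thm:wsp hardness}: (i) the mapping from $\Pi$-instances to \wsp-instances is computable in FPT time in the source parameter, and (ii) the new parameter, i.e.\ the number of steps $k$, is bounded by a function of the source parameter. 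Both are typically built into the construction: the need to achieve an $n^{o(k/\log k)}$ lower bound forces the reduction to blow up the parameter by at most an $O(\log \cdot)$ factor, which is in particular a function bound.

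With (i) and (ii) in hand, the corollary is immediate: if \wsp\ were FPT, composing the reduction with an FPT algorithm for \wsp\ would place $\Pi$ in FPT, contradicting the assumption $\text{W[1]} \neq \text{FPT}$. The 2-degeneracy refinement transfers automatically, because Theorem~\ref{thm:wsp hardness} already guarantees that the constructed instances have the full graph of $(U,<)$ 2-degenerate, and the corollary invokes the very same reduction.

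The main (and essentially the only) obstacle is confirming properties (i) and (ii) by inspecting the reduction in the appendix. In practice this is a formality rather than a mathematical difficulty: the standard ETH lower-bound reductions of the flavour invoked by Theorem~\ref{thm:wsp hardness} are simultaneously FPT-time and parameter-preserving, so W[1]-hardness drops out of the same construction that yields the sharper \ETH-based bound. The sentence preceding the corollary, recalling that \ETH\ is stronger than $\text{W[1]} \neq \text{FPT}$, is therefore only motivational; the actual content of the corollary rests on the parameterised nature of the reduction.
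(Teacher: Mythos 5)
Your proposal is correct, but it takes a genuinely different --- and in fact more careful --- route than the paper. The paper's entire proof of the corollary is the single sentence preceding it: since \ETH{} implies $\text{W[1]}\neq\text{FPT}$, the corollary is presented as an immediate consequence of the \emph{statement} of Theorem~\ref{thm:wsp hardness}. Read literally, that deduction points the wrong way: Theorem~\ref{thm:wsp hardness} yields ``\wsp{} is FPT $\Rightarrow$ \ETH{} fails'', and to conclude ``\wsp{} is FPT $\Rightarrow$ $\text{W[1]}=\text{FPT}$'' one would need ``\ETH{} fails $\Rightarrow$ $\text{W[1]}=\text{FPT}$'', which is the converse of the cited fact. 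What actually carries the corollary is exactly what you propose: the proof of Theorem~\ref{thm:wsp hardness} is a polynomial-time, parameter-preserving reduction chain from \textsc{Partitioned Subgraph Isomorphism} to \subdagiso{} to \wsp{} (the first with $k=|E_G|+|V_G|=\Theta(k_G)$, the second with the number of steps equal to $|V_R|$), the source problem is W[1]-hard with respect to that parameter (it generalizes \textsc{Multicoloured Clique}), and the constructed instances already have a 2-degenerate full graph of $(U,<)$, so W[1]-hardness of \wsp{} --- and hence the corollary, including the degeneracy refinement --- follows directly. Your observation that the sentence before the corollary is ``only motivational'' is right; the substance lies in verifying properties (i) and (ii) of the reduction, which is precisely what the appendix construction delivers, and your approach makes that dependence explicit where the paper leaves it implicit.
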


This corollary proves that while the class of treewidth bounded graphs is sufficiently special to imply an FPT algorithm, considering the more general class of graphs of bounded degeneracy does not make the problem any easier.

\section{Concluding Remarks}\label{sec:conclusion}

The main contribution of this paper is the development of the first FPT algorithm for \wsp{}, where $<$ is a (transitive) relation on the set of users. Unlike  {\sc WSP}($=,\neq$) which is FPT in the general case,  {\sc WSP}($=,\neq,<$) is not FPT unless W[1]=FPT, which is highly unlikely. In fact, under a stronger hypothesis ({\ETH}) we have shown that we even cannot have an algorithm significantly faster than the trivial brute-force algorithm. Thus, it is natural to identify special cases of {\sc WSP}($=,\neq,<$) that are in FPT and of practical relevance. We have done this by  restricting the reduced graph $D$ of $(U,<)$ to lie in the class of graphs of bounded treewidth. We believe that this restriction on treewidth holds for many user hierarchies that arise in practice. On the other hand, we have also shown that the restriction of the reduced (or even full) graph to the class of 2-degenerate graphs does not reduce the complexity of the problem.

Our FPT algorithm is efficient for small values of the number of steps $k$ and the treewidth $r$ of $D$ (we may view $k+r$ as a combined parameter). However, it is quite often the case that the first FPT algorithm for a parameterized problem is not efficient except for rather small values of the parameter, but subsequent improvements bring about an  FPT algorithm efficient for quite large values of the parameter \cite{FluGro06,Nie06}. We believe that a more efficient FPT algorithm for \wsp\ may be possible and we hope to be able to report progress in this area.

One natural extension of this work is to consider the preorder generated from an authorization policy, where $u \sqsubseteq u'$ iff the set of steps for which $u$ is authorized is a subset of the set of steps for which $u'$ is authorized.
This ordering is weaker than that defined in Sec.~\ref{sec:satisfiability} and used throughout the rest of the paper, which required that the set of steps for which $u$ is authorized to be a strict subset of those for which $u'$ is authorized.
Hence, we may have $u \sqsubseteq u'$ and $u' \sqsubseteq u$ but $u \ne u'$.
In fact, such an ordering defines sets of users that are \emph{indistinguishable}, in the sense that they are authorized for the same set of steps.
Hence, we might reasonably consider WSP$(=,\ne,\sqsubset,\sim,\nsim)$, where $u \sim u'$ if $u$ and $u'$ are indistinguishable.
Of course, the graph of $\sqsubseteq$ is not acyclic, as cycles of length two will exist between any pair of indistinguishable users, so new techniques may be required to determine whether this problem is FPT or not.


\medskip

\paragraph{Acknowledgment}
This research was partially supported by an International Joint grant of the Royal Society.

\bibliography{refs}
\bibliographystyle{splncs03}

\clearpage

\appendix\section{Proof of Theorem~\ref{thm:wsp hardness}}

In order to prove Theorem~\ref{thm:wsp hardness}, we first consider the following problem and prove the following lemma.

\begin{center}
\begin{boxedminipage}{.8\textwidth}
\parnamedefn{{\sc SubTDAG isomorphism}}{Transitive acyclic digraphs $D=(V_D,A_D)$
and $R=(V_R,A_R)$,
a subset $W_R=\{w_1,\ldots, w_{|W_R|}\}$ of $V_R$, and disjoint subsets $W_{D,1},\ldots ,W_{D,|W_R|}$ of  $V_D$.
}{$\vert V_R\vert$}{Is there an injection $\gamma:V_R\rightarrow V_D$ such that $\gamma(w_i)\in W_{D,i}$ for each $i\in [|W_R|]$, and
for every $(u,v)\in A_R$, $(\gamma(u),\gamma(v))\in A_D$?}
\end{boxedminipage}
\end{center}

\begin{lemma}\label{lem:dag iso hardness}
{\subdagiso} cannot be solved in time $f(k)n^{o({\frac {k} {\log k}})}$ where $f$ is an arbitrary function, $n$ is the number of vertices in $D$ and $k$ is the number of vertices in $R$, unless {\ETH} fails. This result holds even if $D$ and $R$ are 2-degenerate.
\end{lemma}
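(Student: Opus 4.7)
The plan is to reduce from a problem already known to have an ETH-tight lower bound of the form $f(k)n^{o(k/\log k)}$. The natural source is {\sc Partitioned Subgraph Isomorphism} (PSI), for which Marx's ``Can you beat treewidth?'' lower bound gives precisely this running-time barrier under ETH (with $k$ the number of edges of the pattern graph, and assuming the pattern is drawn from a class of unbounded treewidth). Since \subdagiso\ already has a built-in ``colouring'' gadget in the form of the sets $W_R$ and $W_{D,i}$, PSI is essentially tailor-made as a starting point.

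Starting from a PSI instance — host $G$, pattern $H$, and a partition of $V(G)$ indexed by $V(H)$ — I would construct both $D$ and $R$ as strictly three-layer transitive DAGs (top, middle, bottom). In $D$, the top layer has one marker per part of the partition, the middle layer has a copy $\widehat{v}$ of each $v\in V(G)$, and the bottom layer has a copy $\widehat{e}$ of each $e\in E(G)$. Raw arcs go from each part-marker down to the middle vertices in its part, and from each middle vertex down to the bottom vertices of its incident edges; the transitive closure then adds marker-to-edge arcs. The pattern $R$ mirrors this construction for $H$, and we set $W_R$ to be the top layer of $R$ and each $W_{D,i}$ to be the singleton containing the $i$-th host marker. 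Under this construction a valid injection $\gamma$ must send the top of $R$ onto the top of $D$ part-by-part, each middle vertex of $R$ to some $\widehat{v}$ with $v$ in the correct part, and each edge-vertex of $R$ to an edge-vertex of $D$ whose endpoints are already chosen — which is precisely a partition-respecting subgraph isomorphism from $H$ into $G$. Since $|V(R)|=O(|V(H)|+|E(H)|)=O(k)$, any $f(|V(R)|) n^{o(|V(R)|/\log|V(R)|)}$ algorithm for \subdagiso\ would yield an $f'(k)n^{o(k/\log k)}$ algorithm for PSI, contradicting {\ETH}.

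The main obstacle will be keeping both resulting digraphs 2-degenerate, since transitivity is hostile to low degeneracy: a directed chain of length $\ell$ produces a clique of size $\ell+1$ in the underlying graph. This forces the construction to be strictly three-layered, so that the worst transitive addition is a triangle. Within the three-layer structure one still needs a degeneracy ordering: the plan is to peel off the bottom vertices first (each edge-vertex $\widehat{e}$ has two middle neighbours plus, after transitive closure, the markers of the parts of its endpoints, giving degree at most $4$, which is too high), and then peel the middle and top layers bipartitely. To bring the bottom-layer degree down to $2$, I would replace each edge-vertex by a small gadget that splits $\widehat{e}$ into constantly many degree-$2$ vertices, each attached to only one endpoint plus one shared ``edge-witness''; transitive closure on this gadget is harmless because it is still of height three and of bounded local size. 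Checking that this elimination order witnesses $2$-degeneracy, and confirming that the gadget faithfully propagates the isomorphism constraint, is the technical crux; once that is done the parameter and correctness bookkeeping above closes the proof.
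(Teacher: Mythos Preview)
Your high-level plan---reduce from {\sc Partitioned Subgraph Isomorphism} and encode edges of the pattern and host as new sink vertices---is exactly the paper's approach, and the parameter bookkeeping you sketch is correct. The difficulty you run into is self-inflicted, however.

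The marker layer is redundant. The \subdagiso\ problem already carries the partition information in the sets $W_R$ and $W_{D,i}$; there is no need to materialise the colour classes as vertices of the DAG. The paper simply takes $R$ to be the pattern with every edge subdivided and both half-edges oriented toward the new subdivision vertex, and $D$ the same construction applied to the host. One then sets $W_R$ to be the original (non-subdivision) vertices of the pattern and $W_{D,i}$ to be the $i$th partition class of the host. This two-layer digraph has no directed path of length~$2$, hence is transitive for free; every subdivision vertex has degree exactly~$2$, and once those are deleted the remaining original vertices are pairwise non-adjacent, so $2$-degeneracy is immediate.

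Your three-layer version reintroduces directed paths of length~$2$ (marker $\to$ vertex $\to$ edge), and transitive closure then gives each edge-vertex up to four in-neighbours. The gadget you sketch to repair this (``split $\widehat e$ into degree-$2$ pieces sharing an edge-witness'') does not obviously work: any shared witness sitting below the middle layer inherits, under transitive closure, arcs from both endpoints \emph{and} both markers, so its degree is at least as bad as before. More generally, in a transitive DAG every vertex is adjacent to all of its ancestors, so pushing gadget vertices deeper only increases their degree. As written, the $2$-degeneracy claim is therefore not established. Dropping the marker layer and using $W_R,W_{D,i}$ directly removes the problem entirely and gives the clean argument the paper uses.
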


\noindent
To prove Lemma~\ref{lem:dag iso hardness}, we start by considering the following problem and a lemma by Marx~\cite{marx-toc-treewidth}.

\

\begin{center}
\begin{boxedminipage}{.8\textwidth}
\decnamedefn{{\sc Partitioned Subgraph Isomorphism (PSI)}}{Undirected graphs $H=(V_H,E_H)$ and $G=(V_G=\{g_1,\dots,g_l\},E_G)$, and a partition of $V_H$ into (disjoint) subsets $W_{H,1},\ldots ,W_{H,l}$.}
{ Is there an injection $\phi:V_G\rightarrow V_H$ such that for every $i\in [l]$, $\phi(g_i)\in W_{H,i}$ and for every $(g_i,g_j)\in E_G$, $(\phi(g_i),\phi(g_j))\in E_H$?
}
\end{boxedminipage}
\end{center}

\begin{lemma}{\sc (}Corollary 6.3, {\sc \cite{marx-toc-treewidth})}\label{lem:cis hardness}
{\sc Partitioned Subgraph Isomorphism} cannot be solved in time $f(k)n^{o(\frac{k} {\log k})}$ where $f$ is an arbitrary function, $k$ is the number of edges in $G$ and $n$ is the number of vertices in $H$, unless {\ETH} fails.
\end{lemma}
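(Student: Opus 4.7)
The plan is to reduce from a problem whose ETH-based hardness is already well established, most naturally \textsc{Multicolored Clique}: under ETH (going through sparsified \textsc{3-SAT} and the standard chaining reductions of Chen et al.), there is no $f(k')\,n^{o(k')}$-time algorithm that decides whether an $n$-vertex graph, partitioned into $k'$ color classes, contains a $k'$-clique picking one vertex of each color. The target is to produce a PSI instance whose pattern graph $G$ has $k = O(k'\log k')$ edges while still faithfully encoding a \textsc{Multicolored Clique} instance on $k'$ colors, so that a hypothetical $f(k)\,n^{o(k/\log k)}$ algorithm for PSI would yield $n^{o(k')}$ time for \textsc{Multicolored Clique} (since $k/\log k = \Theta(k')$ when $k = \Theta(k'\log k')$), contradicting ETH.

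The technical heart of the argument — and the reason the exponent carries the unusual $1/\log k$ factor — is the construction of an infinite family of pattern graphs $\{\Gamma_{k'}\}$ that are simultaneously sparse and structurally ``large,'' in the sense of having $O(k')$ vertices, $O(k'\log k')$ edges, and treewidth $\Omega(k')$. Such families exist: classical probabilistic arguments (or constructions based on constant-degree expanders, suitably thickened by $\log k'$ parallel copies of each edge) give graphs of treewidth linear in the number of vertices at edge cost only $O(k'\log k')$. This graph $\Gamma_{k'}$ plays the role of a rigid scaffold. We build the PSI host $H$ by taking, for each color class $c \in [k']$, a bag $W_{H,c}$ consisting of the vertices of color $c$ in the clique instance (replicated as needed so that $W_{H,c}$ corresponds in $\Gamma_{k'}$ to the $c$-th vertex group), and then placing an edge in $H$ between two replicas exactly when the corresponding clique-instance vertices are adjacent and the corresponding $\Gamma_{k'}$-vertices are adjacent. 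A partitioned embedding of $\Gamma_{k'}$ into $H$ then forces the selection of a clique that hits every color.

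The main obstacle is precisely this construction plus the verification that the embedding is bijective in the right sense: every multicolored $k'$-clique yields a valid PSI embedding of $\Gamma_{k'}$, and every valid embedding projects onto such a clique. Once that is in place, plugging the parameters into a hypothetical $f(k)\,n^{o(k/\log k)}$-time PSI algorithm with $k = |E(\Gamma_{k'})| = O(k'\log k')$ gives time $f(k'\log k')\,n^{o(k')}$ for \textsc{Multicolored Clique}, which after the sparsification lemma of Impagliazzo--Paturi--Zane contradicts ETH. The delicate step is calibrating the edge count of $\Gamma_{k'}$ to exactly $\Theta(k'\log k')$ — fewer edges would not support treewidth $\Omega(k')$, while more edges would wash out the $\log k$ savings and give only the trivial $n^{o(k)}$ lower bound.
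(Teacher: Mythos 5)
This lemma is not proved in the paper at all: it is imported verbatim as Corollary~6.3 of Marx's ``Can you beat treewidth?'' paper, so the only question is whether your sketch is a valid standalone proof. It is not. The central step --- that a partitioned embedding of a sparse pattern $\Gamma_{k'}$ into your host $H$ ``forces the selection of a clique that hits every color'' --- is false. In \textsc{Partitioned Subgraph Isomorphism} the only constraints imposed are adjacency requirements for pairs of colour classes joined by an edge of the pattern. If $\Gamma_{k'}$ has only $O(k'\log k')$ edges, your reduction checks adjacency for only $O(k'\log k')$ of the $\binom{k'}{2}$ colour pairs; a colourful vertex selection that is adjacent on exactly those pairs but not on the remaining ones still yields a valid PSI embedding while not being a clique. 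High treewidth of the pattern does not, by itself, transmit \textsc{Multicolored Clique} hardness through the naive ``intersect host adjacency with pattern adjacency'' construction. To verify all pairs you would need $\binom{k'}{2}$ pattern edges, and then $k=\Theta(k'^2)$ gives only the trivial $f(k)\,n^{o(\sqrt{k})}$ bound.

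Two further points. First, your calibration claim is wrong: constant-degree expanders on $k'$ vertices already have treewidth $\Omega(k')$ with only $\Theta(k')$ edges, so the $\log k'$ thickening buys nothing. The $\log k$ in the exponent is not paid for by the edge count of the pattern; it is lost inside Marx's embedding machinery, which is the actual technical heart of the theorem and is entirely absent from your sketch. Marx's proof does not reduce from \textsc{Multicolored Clique} at all: it reduces (via the sparsification lemma) from 3-SAT/binary CSP and uses a delicate lemma showing that every graph of treewidth $w$ admits a low-congestion embedding of essentially arbitrary structures of size $\Omega(w/\log w)$ into its blow-up; instantiating the pattern class as bounded-degree expanders (treewidth linear in the number of edges) then yields the $n^{o(k/\log k)}$ bound. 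Whether the $\log k$ can be removed is a known open problem, which is a strong hint that no short reduction of the kind you propose can establish the stated bound.
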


\noindent
{\bf Proof of Lemma}~\ref{lem:dag iso hardness}. The proof is by a reduction from the {\sc Partitioned Subgraph Isomorphism} problem. We assume that we have an instance of {\sc PSI} as described in the formulation of the problem above.
We assume, without loss of generality, that there are no isolated vertices in $G$. 
Recall that the vertices of $G$ are $g_1,\dots, g_l$ and let $W_{H,1}=\{x(11),\ldots ,x(1r_1)\}, \ldots ,W_{H,l}=\{x(l1),\dots ,x(lr_l)\}.$
We now construct an instance of {\subdagiso}. The digraph $R$ is obtained from $G$ by subdividing every edge and orienting all edges towards the new vertices.  The vertex subdividing an edge $g_ig_j$ will be denoted by $g_{ij}$
and so $R$ will have arcs $(g_i,g_{ij})$ and $(g_j,g_{ij})$. Similarly, $D$ is obtained from $H$ by subdividing every edge and orienting all edges towards the new vertices. The vertex subdividing an edge
$x(i\tau_i)x(j\tau_j)$ will be denoted by $x(i\tau_i,j\tau_j)$. It is easy to verify that $D$ and $R$ are both 2-degenerate acyclic digraphs and both are transitive because they do not have directed paths of length 2. Let $W_R=V_G$ and $W_{D,i}=W_{H,i}$ for each $i\in [l].$ We claim that $(G,H,W_{H,1},\ldots ,W_{H,l})$ is a {\Yes}-instance of {\sc PSI} if and only if $(D,R,W_R,W_{D,1},\ldots ,W_{D,l})$ is a {\Yes}-instance of {\subdagiso}.\\

Suppose that our instance of {\sc PSI} is a {\Yes}-instance and let $\phi$ be the required injection.
By definition, $\phi(g_i)=x(i\tau_i)$, where $\tau_i\in [r_i]$, for each $i\in [l]$. Let $\gamma: V_R\rightarrow V_D$ be defined as follows: $\gamma(g_i)=x(i\tau_i)$ for each $i\in [l]$ and $\gamma(g_{ij})=x(i\tau_i,j\tau_j)$. Since $g_ig_j\in E_G$ implies $\phi(g_i)\phi(g_j)\in E_H$ and by the definition of $\gamma$, if $(g_i,g_{ij})\in A_R$ then $(\gamma(g_i),\gamma(g_{ij}))\in A_D.$  Thus, our instance of {\subdagiso} is a {\Yes}-instance, too.

Now suppose that the instance of  {\subdagiso} is a {\Yes}-instance and $\gamma: V_R\rightarrow V_D$ is the corresponding injection such that $\gamma(g_i)=x(i\tau_i)$, where $\tau_i\in [r_i]$, for each $i\in [l]$.
By definition of $\gamma$, $(g_i,g_{ij})\in A_R$ implies $(x(i\tau_i)\gamma(g_{ij}))\in A_D$ and $(g_j,g_{ij})\in A_R$ implies $(x(j\tau_j)\gamma(g_{ij}))\in A_D$. By the construction of $D$, the above implies that $\gamma(g_{ij})=x(i\tau_i,j\tau_j)$. Now define an injection $\phi: G\rightarrow H$ as follows: $\phi(g_i)=\gamma(g_i)=x(i\tau_i)$ for each $i\in [l]$. The requirement that $g_ig_j\in E_G$ implies $\phi(g_i)\phi(g_j)\in E_H$
follows  from the fact that $\gamma(g_{ij})=x(i\tau_i,j\tau_j)$. Thus, the instance of {\sc PSI} is a {\Yes}-instance, too.

Let $k_G$ be the number of edges in $G$ and $n_H$ the number of vertices in $H$. Recall that $k$ is the number of vertices in $R$ and $n$ is the number of vertices in $D$. By construction of $R$ and $D$ and the assumption that $G$ has no isolated vertices, $k=|E_G|+|V_G|=\Theta(k_G)$
and $n=n_H+|A_H|=O(n_H^2)$.


An algorithm for {\subdagiso} running in time $f(k)n^{o(\frac{k}{\log k})}$ implies an algorithm running in time $f(k_G)n_H^{o(\frac{k_G}{\log k_G})}$ for {\sc PSI}, which along with Lemma~\ref{lem:cis hardness} completes the proof of the lemma.\qed \vspace{5 pt}

\noindent
{\bf Proof of Theorem}~\ref{thm:wsp hardness}. The proof is by a reduction from the {\subdagiso} problem. Let $(D,R,W_R,W_{D,1},\ldots ,W_{D,|W_R|})$ be an instance of {\subdagiso}. We construct an instance of {\wsp} as follows. We define the set $U$ of users to be $V_D$ and the set $S$ of steps to be $V_R$. For every step $w_i\in W_R$, $L(w_i)=W_{D,i}$,
 and for every step $s\in S\setminus W_R$, $L(s)=U$.


We define the relation $<$ on $U$ as follows. For every $x,y\in U$, $x<y$ if and only if $x\neq y$ and there is a arc from $x$ to $y$ in $D$. For every arc $(u,v)\in A_R$, we add a constraint $(<,u,v)$ and for every pair $u,v$ of distinct non-adjacent vertices of $R$, we add a constraint $(\neq,u,v)$. Let the instance of {\wsp} thus constructed be $\cal I$. We claim that $(D,R,W_R,W_{D,1},\ldots ,W_{D,|W_R|})$is a {\Yes} instance of {\subdagiso} iff $\cal I$ is a {\Yes} instance of {\wsp}.

Suppose that $(D,R,W_R,W_{D,1},\ldots ,W_{D,|W_R|})$ is a {\Yes}-instance of {\subdagiso} and let $\gamma$ be a required injection for this instance. We define a plan $\pi$ as $\pi(v)=\gamma(v)$ for every $v\in S$. It is easy to see that $\pi$ is an valid plan for $\cal I$.

Conversely, suppose that $\cal I$ is a {\Yes}-instance of {\wsp} and let $\pi$ be a valid plan for this instance. We define a function $\gamma:V_R\rightarrow V_D$ as follows. For every $u\in V_R$, we set $\gamma(u)=\pi(u)$. It remains to verify that $\gamma$ is a required injection for the instance $(D,R,W_R,W_{D,1},\ldots ,W_{D,|W_R|})$. We first show that $\gamma$ is an injection. Suppose this were not the case and let $u$ and $v$ be two distinct vertices such that $\gamma(u)=\gamma(v)$. This implies that $\pi(u)=\pi(v)$. But then this assignment satisfies neither the constraint $(\neq,u,v)$ nor the constraint $(<,u,v)$, which is a contradiction. Hence, we conclude that $\gamma$ is indeed an injection. Now, consider an arc $(u,v)\in R$. Since $\pi$ is a valid plan, $\pi(u)<\pi(v)$, which implies that $\gamma(u)<\gamma(v)$, which by definition is possible only if $(\gamma(u),\gamma(v))\in A_D$. This completes the proof of correctness of the reduction.

It remains to apply Lemma~\ref{lem:dag iso hardness} to complete the proof of the theorem.
\qed

\end{document}